\begin{document}

\title{Towards Differential Privacy in Sequential Recommendation: A Noisy Graph Neural Network Approach}

\author{Wentao Hu}

\email{stevenhwt@gmail.com}
\author{Hui Fang}
\authornote{Corresponding author.}
\email{fang.hui@mail.shufe.edu.cn}
\affiliation{%
  \institution{Research Institute for Interdisciplinary Sciences and Key Laboratory of Interdisciplinary Research of Computation and Economics, Shanghai University of Finance and Economics}
   \streetaddress{100 Wudong Road}
  \city{Shanghai}
  \country{China}
  \postcode{200433}
}

\renewcommand{\shortauthors}{Wentao Hu and Hui Fang}

\begin{abstract}
  With increasing frequency of high-profile privacy breaches in various online platforms, users are becoming more concerned about their privacy. And recommender system is the core component of online platforms for providing personalized service, consequently, its privacy preservation has attracted great attention. As the gold standard of privacy protection, differential privacy has been widely adopted to preserve privacy in recommender systems. However, existing differentially private recommender systems only consider static and independent interactions, so they cannot apply to sequential recommendation where behaviors are dynamic and dependent. Meanwhile, little attention has been paid on the privacy risk of sensitive user features, most of them only protect user feedbacks. In this work, we propose a novel DIfferentially Private Sequential recommendation framework with a noisy Graph Neural Network approach (denoted as DIPSGNN) to address these limitations. To the best of our knowledge, we are the first to achieve differential privacy in sequential recommendation with dependent interactions. Specifically, in DIPSGNN, we first leverage piecewise mechanism to protect sensitive user features. Then, we innovatively add calibrated noise into aggregation step of graph neural network based on aggregation perturbation mechanism. And this noisy graph neural network can protect sequentially dependent interactions and capture user preferences simultaneously. Extensive experiments demonstrate the superiority of our method over state-of-the-art differentially private recommender systems in terms of better balance between privacy and accuracy.
\end{abstract}

\begin{CCSXML}
<ccs2012>
   <concept>
       <concept_id>10002951.10003227.10003351.10003269</concept_id>
       <concept_desc>Information systems~Collaborative filtering</concept_desc>
       <concept_significance>500</concept_significance>
       </concept>
   <concept>
       <concept_id>10002978.10003029.10011150</concept_id>
       <concept_desc>Security and privacy~Privacy protections</concept_desc>
       <concept_significance>500</concept_significance>
       </concept>
 </ccs2012>
\end{CCSXML}

\ccsdesc[500]{Information systems~Collaborative filtering}
\ccsdesc[500]{Security and privacy~Privacy protections}

\keywords{Differential Privacy, Sequential Recommendation, Graph Neural Networks}


\maketitle

\section{Introduction}\label{sec1}
Recent years have witnessed the tremendous development of various online platforms such as Facebook, Amazon and eBay, they are playing an increasingly important role in users' daily life. And recommender system is the core component of online platforms for providing personalized services, it takes advantage of abundant personal information to recommend items or services that match user preference \citep{eskens2020personal}. The direct access to sensitive personal information makes recommender system a common target of privacy attacks and thus aggravates users privacy concern. Besides, the enactment of General Data Protection Regulation (GDPR) \citep{voigt2017eu} raises users' awareness of privacy and makes it more urgent to devise privacy-preserving recommender systems.

\citet{menkov2020recommendations} find that the actions of arXiv users would be potentially “visible” under targeted attack and they propose to change the privacy settings of the recommender algorithm to mitigate such privacy risk. \citet{fox2020protect} points out that mobile health applications pose new risks to privacy as they need a large volume of health data to be collected, stored and analyzed. And \citep{mcsherry2009differentially,kosinski2013private} show that users' sensitive attributes such as racial information, sexual orientation and political inclinations can be precisely predicted from their interaction history in online platforms by attribute inference attack \citep{zhang2021graph,beigi2020privacy}. Even the outputs of recommender systems are likely to reveal users' sensitive attributes and actual interactions to malicious attackers \citep{jeckmans2013privacy,calandrino2011you}.
In a nutshell, despite the ubiquity of recommender system in various online platforms, it is vulnerable to privacy attacks and may cause leakage of sensitive personal information.

Existing recommender systems (RSs) can be mainly classified into two categories: one category is traditional RSs which includes content-based and collaborative filtering RSs; the other category is sequential RSs \citep{fang2020deep,wang2019sequential}. Traditional RSs model users' historical interactions in a static and independent way, so they can only capture the static long-term preference but ignore to consider short-term interest and the sequential dependencies among user interactions. In contrast, sequential RSs treat user interactions as a dynamic sequence and take sequential dependencies into account to capture both long-term and short-term interest \citep{DBLP:conf/sigir/WangZ0ZW022}. Figure \ref{seqrec} is an illustration of sequential RS, where each user's interactions are indexed chronologically to form a historical interaction sequence. And sequential RSs need to predict users' next interacted items based on their historical behavior sequences \citep{chen2018sequential,kang2018self,sun2019bert4rec,ma2020memory}. Due to the capability of capturing users' dynamic and evolving preferences, sequential RSs are quite important and popular in modern online platforms and have attracted much attention in academia. Therefore, we focus on sequential RSs in this paper and derive to build a privacy-preserving sequential RS that can simultaneously resist to privacy attacks and retain considerable recommendation performance.

\begin{figure}[htb!]
    \centering
    \includegraphics[scale=0.26]{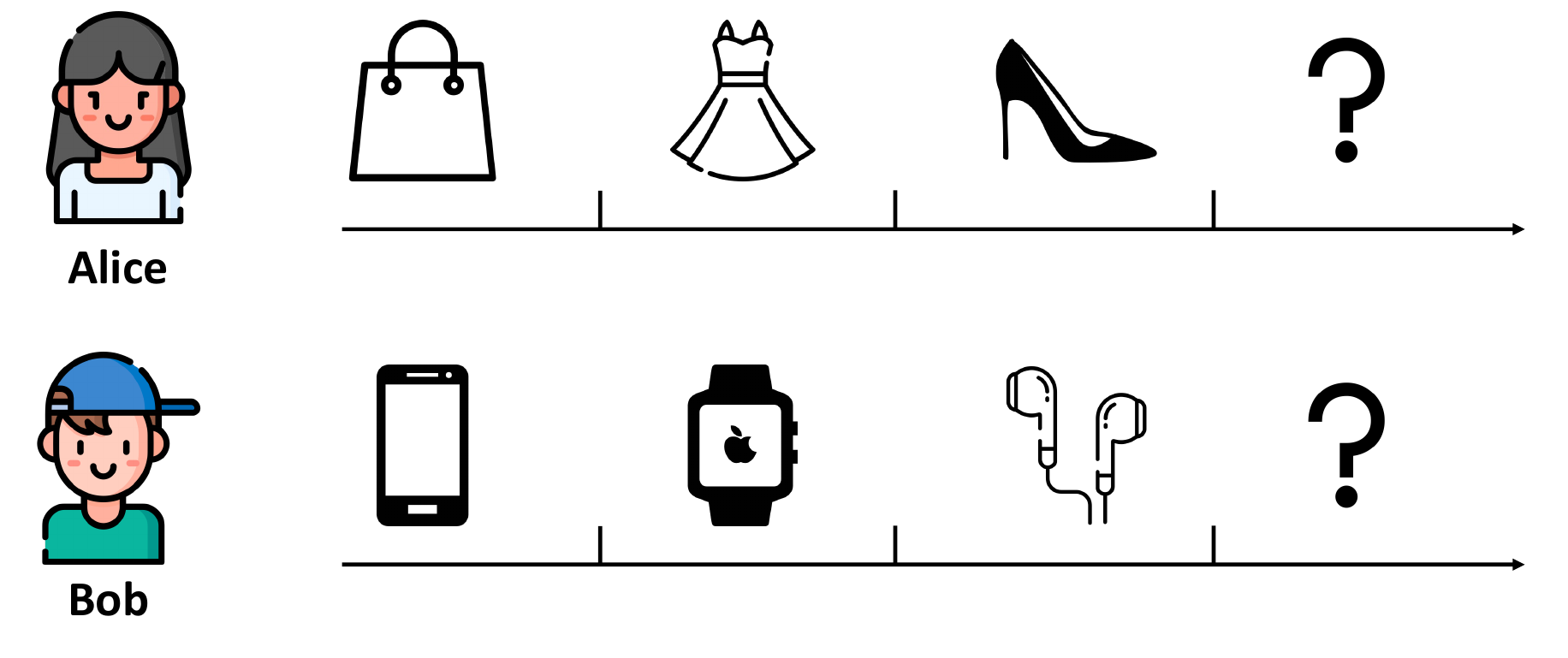}
    \caption{A toy example of sequential RS. Each user' interactions are indexed chronologically to form a interaction sequence. And sequential RSs need to predict the next items that users will interact based on historical interaction sequences.}
    \label{seqrec}
\end{figure}

Previous studies on privacy-preserving RSs mainly adopt anonymisation \citep{duriakova2019pdmfrec}, encryption \citep{erkin2012privacy,nikolaenko2013privacy} and differential privacy \citep{liu2015fast,DBLP:conf/ijcai/HuaXZ15} to protect sensitive user information. The drawback of anonymisation-based RSs is that they cannot provide a provable privacy guarantee or resist to reconstruction attacks \citep{dinur2003revealing,cohen2020linear}. Meanwhile, encryption-based RSs bring heavy computation overhead and fail to prevent attackers from inferring sensitive user information based on the exposed output of RSs \citep{mcsherry2009differentially}. So we resort to differential privacy \citep{dwork2006calibrating} to build a privacy-preserving sequential RS on account of its provable privacy guarantee and lightweight computation overhead. On the other hand, differentially private sequential RSs are quite under-explored because of the challenge to consider sequential dependencies in differential privacy. Existing differentially private RSs are all based on traditional RSs, which can further be divided into two categories. The first category focuses on neighbor-based collaborative filtering \citep{mcsherry2009differentially,zhu2013differential, gao2020dplcf}, where noise is added into the calculation of item/user similarity matrix to protect users' sensitive interactions. The second category is based on matrix factorization \citep{DBLP:conf/ijcai/HuaXZ15,liu2015fast,berlioz2015applying,shin2018privacy}. They add noise into the objective function or gradients in order to protect users' ratings or the fact that a user has rated an item. Despite their effectiveness being partially validated, we argue that these solutions suffer from the following three major limitations.

First, they model users' preferences based on static rating matrix, thus cannot capture dynamic and evolving preferences in users' interaction sequences. Second, interactions are considered to be independent and equally important in existing differentially private RSs. Nevertheless, users' behavior sequences are characterized by complicated sequential dependencies \citep{kang2018self,chang2021sequential} and the most recent interactions may have greater influence on user preferences \citep{fang2020deep}, thus these solutions are not applicable in sequential recommendation. Third, they only protect users' explicit ratings or implicit feedbacks while neglect protection on users' side information such as user demographics. \citet{weinsberg2012blurme} show that there are privacy risks on users' side information, for example, user gender can be accurately inferred from users' ratings. Although \citet{zhang2021graph} design a dual-stage perturbation strategy to protect sensitive user features, they ignore to protect on users' interactions let alone dependent behavior sequences. In short, none of these differentially private RSs focus on protecting users' sensitive features and interactions simultaneously in order to achieve a better balance between privacy and utility.

To bridge the above research gaps, we propose a differentially private sequential recommendation framework called DIPSGNN to protect user features and sequentially dependent interactions at the same time. Specifically, we first take advantage of piecewise mechanism \citep{wang2019collecting} to protect users' sensitive features at input stage and use the perturbed features to initialize user embedding. Then, a gated graph neural network  \citep{DBLP:journals/corr/LiTBZ15} is employed to capture sequential dependencies and dynamic preferences in users' behavior sequences. In this gated graph neural network, we innovatively add calibrated noise into the aggregation step based on aggregation perturbation mechanism \citep{sajadmanesh2022gap} to prevent attackers from inferring users' private interactions based on the exposed recommendation results. To summarize, the main contributions of our work are as follows:
\begin{enumerate}
    \item To the best of knowledge, we are the first to achieve differential privacy for dependent interactions in sequential recommendation.
    
    \item We propose a novel aggregation scheme that can protect time-dependent interactions and capture user preferences without considerably impairing performance.
    
    \item Both users' features and interactions are well-protected in DIPSGNN which offers a better balance between privacy and accuracy.
    
    \item Theoretical analysis and extensive experiments on three real-world datasets demonstrate the effectiveness and superiority of DIPSGNN over state-of-the-art differentially private recommender systems. 
\end{enumerate}

The rest of this article is structured as follows. Section \ref{relatedwork} reviews the related work. Section \ref{preliminaries} introduces the preliminaries and problem setup. Section \ref{methodology} elaborates on the technical details of our proposed DIPSGNN. Section \ref{experiments} discusses the experimental results and analyses. Finally, we conclude this article and propose several future directions in Section \ref{conclusion}.

\section{Related Work}
\label{relatedwork}
In this section, we review three lines of studies related to our work including: sequential recommendation, privacy-preserving recommender systems and privacy-preserving graph neural network.

\subsection{Sequential Recommendation}
Sequential recommendation recommends the next item based on the chronological sequence of users' historical interactions. The earliest work \citep{shani2005mdp} on sequential recommendation leverages Markov decision process to model item transition. Later, FPMC \citep{DBLP:conf/www/RendleFS10} fuses the idea of Markov chains with matrix factorization, it learns the first order transition matrix by assuming next item is only relevant with the previous one. Nevertheless, these conventional methods combine past components independently and neglect long range dependency. For stronger ability to model long-term sequential dependency, deep learning based methods represented by recurrent neural networks (RNN) \citep{quadrana2017personalizing,cui2018mv,xu2019recurrent} and attention mechanism \citep{kang2018self,ying2018sequential,zhang2019feature} have been in blossom in sequential recommendation. For example, \cite{xu2019recurrent} combines the architecture of RNN and Convolutional Neural Network to capture complex dependencies in user behavior sequences. And attention-based neural networks such as Transformer \citep{vaswani2017attention,de2021transformers4rec} and BERT \citep{sun2019bert4rec} use attention scores to 
explore item-item relationships and achieve remarkable performance in sequential recommendation. Recently, graph neural networks (GNN) \citep{DBLP:conf/iclr/KipfW17,velivckovic2018graph,ma2020memory} have attracted much interest in sequential recommendation, as the input data can be represented by graphs. SRGNN \citep{wu2019session} converts user behavior sequences into directed graphs and learns item embeddings on these graphs with gated graph neural network \citep{DBLP:journals/corr/LiTBZ15}. APGNN \citep{zhang2020personalized} promotes SRGNN by fusing personalized user characteristics with item transition patterns in user behavior graphs in order to better model user preferences. SUGRE \citep{chang2021sequential} reconstructs loose item sequences into tight item-item interest graphs based on metric learning, which further improves the performance of GNN in sequential recommendation. By elaborately modeling user interest based on graphs constructed from interaction sequences, GNN-based methods have demonstrated great effectiveness in sequential recommendation.

\subsection{Privacy-preserving recommender systems}

Various privacy-preserving techniques have been explored in recommender systems such as anonymisation \citep{duriakova2019pdmfrec}, encryption \citep{erkin2012privacy,nikolaenko2013privacy}, differential privacy \citep{liu2015fast,DBLP:conf/ijcai/HuaXZ15} and federated learning \citep{li2022federated,li2023distvae,wei2023edge}. However, anonymisation-based RSs cannot provide a provable privacy guarantee and encryption-based RSs bring heavy computation overhead. Besides, they usually focus on traditional recommender systems. Recently, \citep{li2022federated,li2023distvae,wei2023edge} explore to build a federated sequential recommender system in edge computing scenario. But they can only provide general privacy protection by adding noise to the gradient and fail to provide specific privacy guarantee for user features and user behavior sequences in sequential recommendation. As differential privacy can provide provable privacy guarantee and lightweight computation overhead, we focus on utilizing differential privacy to provide exact privacy guarantee for user features and user behavior sequences in sequential recommendation.

Differential privacy \citep{dwork2006calibrating} has been introduced into recommender systems since \citep{mcsherry2009differentially}. It adds noise into calculation of item-similarity matrix in order to protect users' explicit ratings. After that, \citet{gao2020dplcf} protect users' implicit feedbacks by applying binary randomized response \citep{erlingsson2014rappor} on them and then send the perturbed feedbacks to the centralized server to calculate a private item-similarity matrix. 
Aside from neighborhood-based collaborative filtering, there is another line of works which are based on matrix factorization (MF) \citep{koren2009matrix}. \citet{DBLP:conf/ijcai/HuaXZ15} integrate objective perturbation into MF to make sure the final item embedding learned by MF satisfy differential privacy. Besides, they decompose the noise component into small pieces so that it can fit with the decentralized system. \citet{liu2015fast} build a differentially private MF framework by using a novel connection between differential privacy and bayesian posterior sampling via stochastic gradient langevin dynamics. And \citet{shin2018privacy} further divide user ratings into sensitive and non-sensitive ones then add different amount of noise to these two kinds of ratings when calculating gradients. Finally, they achieve a uniform privacy guarantee for sensitive ratings. However, these differentially private recommender systems can only protect static rating matrix and assume that interactions are independent and equally important. They largely ignore the sequential dependencies and users' dynamic preference, which makes them inadequate for sequential recommendation. Meanwhile, the protection on user features is overlooked in these works. Though \cite{zhang2021graph} shed light on the protection towards user demographics, it lacks the ability to protect interactions let alone dependent behavior sequences.

\subsection{Privacy-preserving graph neural network}
Graph neural networks (GNNs) have been broadly employed in sequential recommendation as users' interaction sequences can be easily transformed into graph data. However, rich node features and edge information in GNNs make them vulnerable to privacy attacks. \citep{wu2021linkteller,he2021stealing,zhang2021graphmi} show that private edges can be recovered from GNNs via the influence of particular edge on model output. To mitigate such privacy risk, various privacy-preserving techniques for GNNs are emerging. \citet{wang2021privacy} propose a privacy-preserving representation learning framework on graphs from mutual information perspective. \citet{hsieh2021netfense} perturb graphs based on combinatorial optimization to protect private node labels. Nevertheless, these methods cannot provide a formal privacy guarantee. To address this limitation, differential privacy (DP) has been applied to protect privacy in GNNs. \citet{wu2021linkteller} propose LapGraph to provide DP protection for sensitive edges by adding laplace noise into adjacency matrix. It can be regarded as a data mask method with formal differential privacy guarantee. But \citep{kolluri2022lpgnet,hidano2022degree} argue that adding noise into adjacency matrix destroys original graph structure and ruins the neighborhood aggregation inside GNNs. To remedy this defect, \citet{sajadmanesh2022gap} propose an aggregation perturbation mechanism to safeguard the privacy of edges in undirected and unweighted graphs, which differs from the directed and weighted graphs in our work. More specifically, it forces the sensitivity of embedding update process equal to one by normalizing each row of embedding matrix. However, we find that this normalization step makes embedding matrix deviate too much from its true value and brings excessive noise. To resolve this problem, we normalize the rows of embedding matrix with tunable threshold for different datasets and achieves better utility.

\section{Preliminaries}
\label{preliminaries}
\subsection{Differential Privacy}
Differential privacy \citep{dwork2006calibrating} is a powerful tool to provide formal privacy guarantee when processing sensitive data. It ensures the output of a randomized algorithm is insensitive to the deletion/addition of one individual record in a database by adding calibrated noise to the algorithm. The formal definition of differential privacy is as follows.

\begin{definition}[Differential Privacy]\label{def:dp}
A randomized algorithm $\mathcal{A}$: $\mathcal{X}^{n} \rightarrow \mathcal{Y}$ is $(\epsilon, \delta)$-differentially private, if for all neighboring datasets $X, X^{\prime} \in \mathcal{X}^{n}$ and all $S \subseteq \mathcal{Y}$,
\begin{equation}
\operatorname{Pr}[\mathcal{A}(X) \in S] \leq e^{\epsilon} \cdot \operatorname{Pr}\left[\mathcal{A}\left(X^{\prime}\right) \in S\right]+\delta. 
\end{equation}
\end{definition}
\noindent where $\operatorname{Pr}[\cdot]$ represents probability, $\epsilon>0$ is privacy budget and $\delta>0$ is failure probability. A smaller $\epsilon$ or $\delta$ brings a stronger privacy guarantee but forces us to add more noise in the randomized algorithm. Besides, neighboring datasets denote a pair of datasets differing at most one record. In our work, user interaction sequences are transformed into graphs and we focus on edge-level differential privacy, so neighboring datasets represent two graphs differ at only one edge. Besides graph topology data, our work also involves multidimensional numerical and categorical user feature data.
\citet{ren2022cross} propose a hybrid differential privacy notion to properly perturb heterogeneous data types in social networks. They utilize edge-level differential privacy to protect graph topology data and local differential privacy \cite{kasiviswanathan2011can} to protect user attributes. Inspired by hybrid differential privacy notion, we also integrate local differential privacy into our model to protect user features, as user feature data and graph topology data have different characteristics. The formal definition of local differential privacy is as follows:

\begin{definition}[Local Differential Privacy]\label{def:ldp}
A randomized function $f(\cdot)$ satisfies $\epsilon-\mathrm{LDP}$ if and only if for any two respective inputs $x$ and $x^{\prime}$ of one user and all output $y$, 
\begin{equation}
\operatorname{Pr}\left[f(x)=y\right] \leq e^{\epsilon} \cdot \operatorname{Pr}\left[f\left(x^{\prime}\right)=y\right],
\end{equation}

\end{definition}
\noindent where $\epsilon$ is also called privacy budget. A lower $\epsilon$ provides stronger privacy guarantee but forces us to add heavier noise to each user's data. In local differential privacy, the perturbation of each user's data guarantees that an external attacker cannot easily infer which of any two possible inputs $x$ and $x^{\prime}$ from one user is used to produce the output $y$. Thus, the true input value of this user is protected with high confidence.

\subsection{Problem Statement}
Let $U=\left\{u_{i}\right\}_{i=1}^{|U|}$ and $V=\left\{v_{j}\right\}_{j=1}^{|V|}$ be the set of users and items in the system, respectively. Each user $u$ has a behavior sequence in the chronological order $S^{u}=\left\{v^u_{s}\right\}_{s=1}^{n_{u}}$ ($v^u_s\in V$, and $n_u$ is the length of user $u$'s behavior sequence) and a sensitive feature vector $\mathbf{x}_u$. We convert each $S^u$ into a directed weighted graph $\mathcal{G}^u=(\mathcal{V}^u,\mathcal{E}^u)$, where $\mathcal{V}^u$ and $\mathcal{E}^u$ represent the set of item nodes, and the set of edges, respectively. All numerical features in $\mathbf{x}_u$ are normalized into $[-1,1]$ with min-max normalization and all categorical features are encoded into one-hot vectors. Based on $\{\mathcal{G}^u|u \in U\} $ and $\{\mathbf{x}_u|u \in U\}$, the goal of our work is to \emph{build a differentially private sequential recommendation framework to generate accurate top-$K$ recommendation list for each user ’s next click (i.e., next-item prediction), meanwhile prevent outside malicious attackers from inferring users' sensitive features and sequentially dependent interactions}. 


\section{Methodology}
\label{methodology}
Our proposed DIPSGNN seeks to protect sensitive user features and interactions without sacrificing considerable performance in sequential recommendation. Figure \ref{DIPSGNN_framework} depicts the overview of DIPSGNN: we first protect users' features by perturbing them at input stage. Then, we convert each user's behavior sequence $S^u$ into a user behavior graph $\mathcal{G}^u$ and feed it into DIPSGNN to update user embedding and item embedding. And we add calibrated noise in DIPSGNN layer to prevent the leakage of user interactions from recommendation results. Finally, the updated user embedding and item embedding are concatenated to make next-item recommendation. We will elaborate on details of these components subsequently.

\begin{figure*}[htbp!]
    \centering
    \includegraphics[scale=0.48]{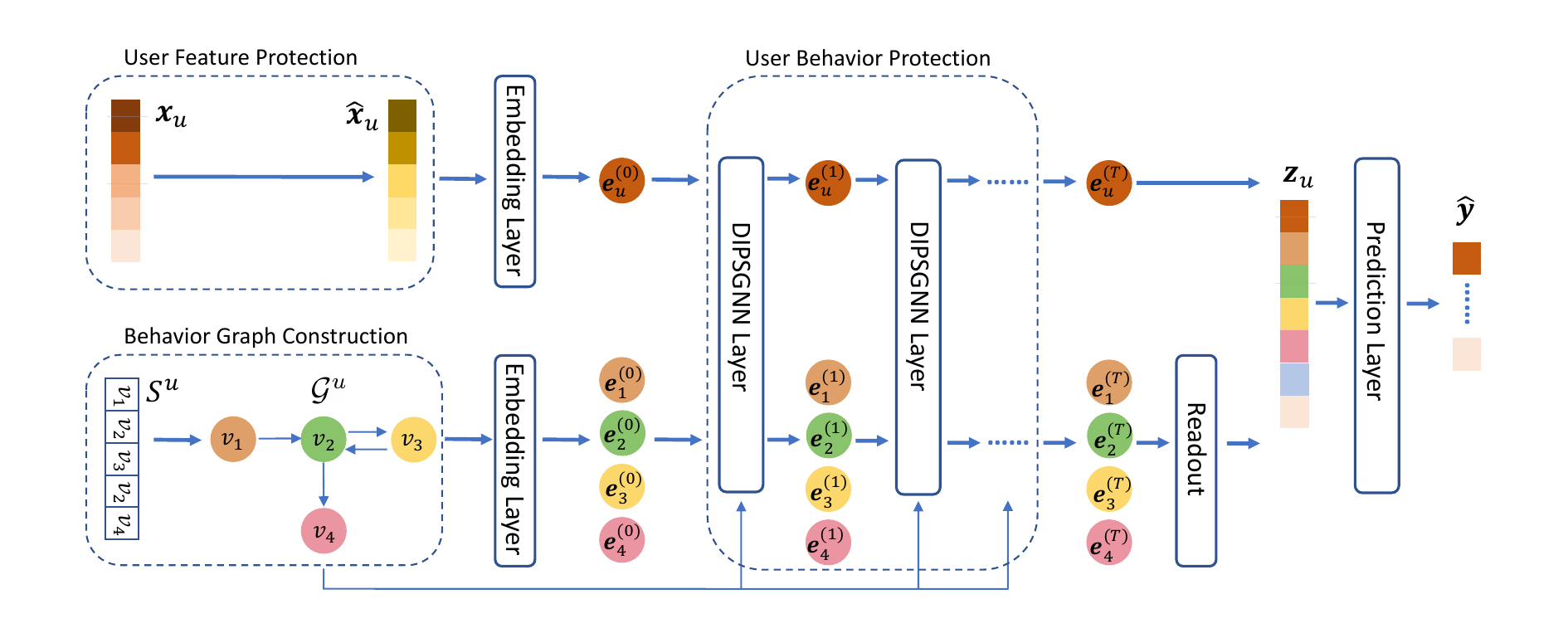}
    \caption{The framework of DIPSGNN. First, user features are perturbed and protected at input stage. Next, we construct user behavior graph based on user interaction sequence. Then, user behavior graph is protected with our newly designed DIPSGNN at embedding update stage. Finally, we utilize updated user embedding and item embedding to make next-item recommendation without leakage of user features and interactions.}
    \label{DIPSGNN_framework}
\end{figure*}

\subsection{User Feature Protection}
To protect sensitive user features, we adopt the strategy to perturb them at input stage with local differential privacy. Concretely, we add noise to raw user features based on piecewise mechanism (PM)  \citep{wang2019collecting}, as it can handle multi-dimensional numerical and categorical features. With the post-processing property of differential privacy, user features will keep private during recommendation. Suppose user $u$'s feature vector consists of $n$ different features, where each numerical feature is represented by a single number and each categorical feature is represented by a single one-hot vector. Thus, user $u$'s feature vector $\mathbf{x_u}=\mathbf{x}_1 \oplus \mathbf{x}_2 \oplus \cdots \oplus \mathbf{x}_n \in \mathbb{R}^{d_0}$ ($d_0\geq n$), where $\oplus$ is the concatenation operation. In this part, we aim to perturb user features with privacy budget $\epsilon_1$. If we perturb each feature equally, then the privacy budget for each feature shrinks to $\epsilon_1/n$. This will harm the utility of the perturbed data as the incurred noise variance is not minimized in this case. To achieve the lowest incurred noise variance, we randomly select $k$ ($k<n$) features from $\mathbf{x}_u$ and perturb each of them with privacy budget $\epsilon_1/k$, while the non-selected features are dropped by masking them with $0$ to avoid privacy leakage. We further follow \citep{wang2019collecting} to set $k$ as:
\begin{equation}
    k=\max \{1, \min \{n,\lfloor\frac{\epsilon_1}{2.5}\rfloor\}\}.
\end{equation}
For each $\mathbf{x}_i$ in the $k$ selected features, if it is a numerical feature, we first normalize it into $[-1,1]$ with min-max normalization and then perturb it by executing Algorithm \ref{pm1} with privacy budget $\epsilon=\frac{\epsilon_1}{k}$. On the contrary, if the selected $\mathbf{x}_i$ is a categorical feature, as it is represented by an one-hot vector, we perturb it with optimized unary encoding (OUE) method \citep{wang2017locally}. Because OUE method will minimize the variance when perturbed one-hot vector has a higher dimension. The details of OUE method are shown in Algorithm \ref{pm2}. By integrating the perturbation for numerical features and categorical features, the whole process of PM are depicted in Algorithm \ref{pm3}. Theorem \ref{pm2_th} guarantees that it satisfies $\epsilon_1$-local differential privacy.

\begin{algorithm}[htb!]
\caption{Piecewise mechanism for one-dimensional number.}
\label{pm1}
\begin{algorithmic}[1] 
\State \textbf{Input}: a number $x \in [-1,1]$ and privacy budget $\epsilon$ 
\State \textbf{Output}: perturbed number ${x'} $
\State  $C \leftarrow \frac{\exp (\epsilon / 2)+1}{\exp (\epsilon / 2)-1}$ 
\State  $l(x)=\frac{C+1}{2} \cdot x-\frac{C-1}{2}$
\State   $r(x)=l(x)+C-1$
        
        \State uniformly sample $\beta$ from $[0,1]$
        \If{$\beta < \frac{\exp(\epsilon/2)}{\exp(\epsilon/2)+1}$ }
            \State uniformly sample $x'$ from $[l(x),r(x)]$
        \Else
            \State uniformly sample ${x'}$ from $[-C,l(x))\cup (r(x),C]$
        \EndIf 
\State \textbf{return} ${x'}$
\end{algorithmic}
\end{algorithm}

\begin{algorithm}[htb!]
\caption{Optimized unary encoding method for one-hot vector.}
\label{pm2}
\begin{algorithmic}[1] 
\State \textbf{Input}: an one-hot vector $\bold{x}=[0,\ldots,0,1,0,\ldots,0] \in \mathbb{R}^{m}$ and privacy budget $\epsilon$ 
\State \textbf{Output}: perturbed one-hot vector ${\bold{x}'}$
\For{$i \in \{1,2,\ldots,m\}$}
        \State $\bold{x}[i]$ and $\bold{x}'[i]$ denote $i$-th position in an one-hot vector
        \State $
            \operatorname{Pr}[{\bold{x}'}[i]=1]=\begin{cases}
           0.5, \ \ \ \ \ \ \ \ \ \ \ \ \ \ \text{if \ $\bold{x}[i]=1$} \\
           \frac{1}{\exp(\epsilon)+1}, \ \ \ \ \ \text{if \ $\bold{x}[i]=0$}
        \end{cases} $
\EndFor
\State \textbf{return} ${\bold{x}'}$
\end{algorithmic}
\end{algorithm}

\begin{algorithm}[htb!]
\caption{Piecewise mechanism for multi-dimensional data with numerical and categorical features.}
\label{pm3}

\begin{algorithmic}[1] 
\State \textbf{Input}:  $\mathbf{x_u}=\mathbf{x}_1 \oplus \mathbf{x}_2 \oplus \cdots \oplus \mathbf{x}_n \in \mathbb{R}^{d_0}$, privacy budget $\epsilon_1$ 
\State \textbf{Output}: perturbed feature vector $\widehat{\mathbf{x}}_u$
 \State  $\mathbf{x_u} = \mathbf{x}_1 \oplus \mathbf{x}_2 \oplus \cdots \oplus \mathbf{x}_n \leftarrow \{0\}^{d_0}$
 \State $k \leftarrow \max \{1, \min \{n,\lfloor\frac{\epsilon_1}{2.5}\rfloor\}\}$
 \State $\mathcal{A} \leftarrow k$ different values uniformly sampled from $\{1,2,\ldots,n\}$
\For{$i \in \mathcal{A}$}
    \If{$\mathbf{x}_i$ is a numerical feature}
        \State $\widehat{\mathbf{x}}_i \leftarrow$ run Algorithm \ref{pm1} with $x=\mathbf{x}_i$ and $\epsilon=\epsilon_1/k$
        \State $\widehat{\mathbf{x}}_i \leftarrow \frac{n}{k} \widehat{\mathbf{x}}_i$
    \Else 
    \State 
    $\widehat{\mathbf{x}}_i \leftarrow$ run Algorithm \ref{pm2} with $\mathbf{x}=\mathbf{x}_i$ and $\epsilon=\epsilon_1/k$     
    \EndIf   
\EndFor
\State \textbf{return} $\widehat{\mathbf{x}}_u$
\end{algorithmic}
\end{algorithm}

\begin{theorem}
\label{pm2_th}
Algorithm \ref{pm3} satisfies $\epsilon_1$-local differential privacy.
\end{theorem}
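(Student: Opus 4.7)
The plan is to decompose Algorithm \ref{pm3} into its privacy-relevant components and argue $\epsilon_1$-LDP via (i) per-coordinate privacy guarantees of Algorithms \ref{pm1} and \ref{pm2}, (ii) post-processing invariance, and (iii) basic sequential composition across the $k$ sampled coordinates. The crucial structural observation is that the random index set $\mathcal{A}$ is sampled uniformly from $\{1,\dots,n\}$ \emph{independently of the data}, so conditioning on $\mathcal{A}$ decouples the analysis into a product of independent per-feature mechanisms, and the non-selected coordinates contribute zero privacy loss because their output is the constant $\mathbf{0}$ regardless of the input.

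Concretely, I would first establish two per-coordinate lemmas. For Algorithm \ref{pm1}, I would verify the standard piecewise-mechanism calculation: the output density on any admissible point $x'\in[-C,C]$ is either $\tfrac{e^{\epsilon/2}}{2(C-1)(e^{\epsilon/2}+1)}$ (inside $[l(x),r(x)]$) or $\tfrac{1}{2C(e^{\epsilon/2}+1)}$ (outside), and the ratio of the larger to the smaller density over any two inputs is exactly $e^{\epsilon}$, giving $\epsilon$-LDP. For Algorithm \ref{pm2}, since each of the $m$ coordinates of the one-hot vector is flipped independently with probability depending only on whether its true bit is $0$ or $1$, a direct computation of $\Pr[\widehat{\mathbf{x}}'=\mathbf{y}]/\Pr[\widehat{\mathbf{x}}'=\mathbf{y}\mid \mathbf{x}'']$ shows the worst-case ratio arises when $\mathbf{x}$ and $\mathbf{x}''$ differ in their single $1$-coordinate, and that ratio is $\tfrac{0.5}{1/(e^{\epsilon}+1)}\cdot\tfrac{e^{\epsilon}/(e^{\epsilon}+1)}{0.5}=e^{\epsilon}$, so OUE is $\epsilon$-LDP.

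Next, I would handle the full vector mechanism. Fix any two inputs $\mathbf{x}_u,\mathbf{x}_u'\in\mathbb{R}^{d_0}$ and any output $\widehat{\mathbf{x}}_u$. Since $\mathcal{A}$ is data-independent,
\begin{equation}
\Pr[\mathcal{M}(\mathbf{x}_u)=\widehat{\mathbf{x}}_u]=\sum_{\mathcal{A}}\Pr[\mathcal{A}]\cdot\prod_{i\in\mathcal{A}}\Pr[\widehat{\mathbf{x}}_i\mid \mathbf{x}_i]\cdot\prod_{i\notin\mathcal{A}}\mathbf{1}[\widehat{\mathbf{x}}_i=\mathbf{0}].
\end{equation}
The indicator factors are identical for $\mathbf{x}_u$ and $\mathbf{x}_u'$, so only the per-coordinate factors differ. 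Applying the per-coordinate lemmas with budget $\epsilon_1/k$ gives $\Pr[\widehat{\mathbf{x}}_i\mid \mathbf{x}_i]\leq e^{\epsilon_1/k}\Pr[\widehat{\mathbf{x}}_i\mid \mathbf{x}_i']$ for each $i\in\mathcal{A}$, and multiplying across the $k$ selected coordinates (basic composition) yields a factor of $e^{k\cdot(\epsilon_1/k)}=e^{\epsilon_1}$ term-by-term in the sum. Summing over $\mathcal{A}$ preserves this bound, proving $\epsilon_1$-LDP. The rescaling $\widehat{\mathbf{x}}_i\leftarrow \tfrac{n}{k}\widehat{\mathbf{x}}_i$ is a data-independent transformation of the released output, so post-processing invariance ensures it does not affect the privacy guarantee.

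The main subtlety I expect to navigate is making the composition step clean despite $\mathcal{A}$ being random: one must avoid a naive argument that would incur an extra sampling-amplification or union-bound factor. The fix is precisely to condition on $\mathcal{A}$ before comparing densities, since the sampling distribution is identical under $\mathbf{x}_u$ and $\mathbf{x}_u'$ and therefore cancels in the ratio. A secondary point worth explicit treatment is the mixed continuous/discrete output space (PM produces a density in $[-C,C]$, OUE produces a discrete vector, masked coordinates are a point mass at $0$); I would handle this by treating $\Pr[\mathcal{M}(\mathbf{x}_u)=\widehat{\mathbf{x}}_u]$ as a joint density with respect to the appropriate product reference measure, so that Definition \ref{def:ldp} applies uniformly across coordinates.
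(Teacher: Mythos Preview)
Your proposal is correct and follows the same approach as the paper's proof: establish $(\epsilon_1/k)$-LDP separately for Algorithms~\ref{pm1} and~\ref{pm2} via direct density/probability-ratio computations, then apply basic composition over the $k$ selected coordinates (your explicit handling of the data-independent sampling of $\mathcal{A}$, the post-processing step for the $\tfrac{n}{k}$ rescaling, and the mixed continuous/discrete measure is in fact more careful than the paper's terse appeal to composition). One minor slip to fix when you write it up: the density values you state for Algorithm~\ref{pm1} are off by constants---the inside and outside densities are $\tfrac{e^{\epsilon/2}}{(C-1)(e^{\epsilon/2}+1)}$ and $\tfrac{1}{(C+1)(e^{\epsilon/2}+1)}$ respectively (note $r(x)-l(x)=C-1$ and the complementary region has length $C+1$)---but with these corrected values the ratio is indeed $e^{\epsilon}$ and your argument goes through unchanged.
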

\begin{proof}
Please see appendix.
\end{proof}

\subsection{Behavior Graph Construction}
To capture the complex sequential dependencies and transition patterns, we convert each user's behavior sequence $S^u$ into a user behavior graph $\mathcal{G}^u=(\mathcal{V}^u,\mathcal{E}^u)$. Inspired by \citep{zhang2020personalized,wu2019session}, $\mathcal{G}^u$ is a directed and weighted graph whose topological structure can be represented by two adjacency matrices, $\mathbf{A}_u^{out}$ and $\mathbf{A}_u^{in}$. The weights in $\mathbf{A}_u^{out}, \mathbf{A}_u^{in}$ are the occurrence number of consecutive interactions between two items. For instance, the weight in position $[i,j]$ of $\mathbf{A}_u^{out}$ is $\text{Count}(v_i,v_j)$, which means the number of times that user $u$ interacts with $v_i$ first, and then immediately with $v_j$. It should be noted that we drop the normalization step \citep{zhang2020personalized,wu2019session} to divide $\text{Count}(v_i,v_j)$ by the outdegree of $v_i$, otherwise the deletion of one interaction in $S^u$ will affect one row rather than one element in $\mathbf{A}_u^{out}$ or $\mathbf{A}_u^{in}$, which impedes the subsequent differential privacy analysis.
\begin{equation}
    \begin{aligned}
        \mathbf{A}_u^{out}[i,j]&=\text{Count}(v_i,v_j), \\
        \mathbf{A}_u^{in}[i,j]&=\text{Count}(v_j,v_i).
    \end{aligned}
\end{equation}

\subsection{User Behavior Protection: DIPSGNN}
As we mentioned before that malicious outside attackers can infer user interactions from recommendation results, we need to add noise into the recommendation algorithm in order to protect interactions. Instead of perturbing user behavior graph at input stage, we choose to add calibrated noise into GNN propagation step to protect user interactions. The reason lies in that perturbation on user behavior graph will destroy original graph structure and distort aggregation process inside GNNs \citep{kolluri2022lpgnet,hidano2022degree}, we will show the superiority of aggregation perturbation over graph structure perturbation by empirical experiments.  
In this section, we will dive into the details of aggregation perturbation inside DIPSGNN. As user characteristics impact their preferences, we consider user features when initializing user embedding. User $u$'s embedding $\mathbf{e}_u^{(0)}$ and item $v_i$'s embedding $\mathbf{e}_i^{(0)}$ are initialized as:
\begin{equation}
    \mathbf{e}_u^{(0)}=\widehat{\mathbf{x}}_u\mathbf{E}_{U},\  \mathbf{e}_i^{(0)}= \mathbf{x}_i\mathbf{E}_{V},
\end{equation}
where $\widehat{\mathbf{x}}_{u} \in \mathbb{R}^{1 \times d_{0}}$ is perturbed feature of user $u$ in Algorithm \ref{pm2} and $\mathbf{E}_{U} \in \mathbb{R}^{d_{0} \times d'}$ is user embedding matrix. Similarly, $\mathbf{x}_{i} \in \mathbb{R}^{1 \times |V|}$ and $\mathbf{E}_{V} \in \mathbb{R}^{|V| \times d}$ are respectively item one-hot encoding and item embedding matrix. Then, we feed the initialized user embedding and item embedding into DIPSGNN and update them iteratively.

At each time step $t$ of node update, we fuse user embedding $\mathbf{e}_u^{(t-1)}$ with item embedding $\mathbf{e}_i^{(t-1)}$ to update them together. $\mathbf{h}^{(t-1)}_i=\mathbf{e}^{(t-1)}_i \oplus \mathbf{e}_u^{(t-1)} \in \mathbb{R}^{1\times (d+d')}$ denotes the joint embedding at $t-1$ time for item $v_i$, where $\oplus$ is the concatenation operation. Combining the joint embedding of all items, we can get a joint embedding matrix $\mathbf{H}^{(t-1)} \in \mathbb{R}^{|V| \times (d+d')}$. 
To bound the sensitivity of joint embedding matrix and facilitate the privacy analysis, we clip each row of $\bar{\mathbf{H}}^{(t-1)}$ to make its norm equal to a constant $C$. And $C$ can be properly tuned on different datasets for better utility.

\begin{equation}
\label{norm}
    \bar{\mathbf{H}}^{(t-1)}_i=\mathbf{h}^{(t-1)}_i * \frac{{C}}{ ||\mathbf{h}^{(t-1)}_i||_2}, \ i=1,\cdots,|V|,
\end{equation}
where $\bar{\mathbf{H}}^{(t-1)}_i$ means $i$-th row of $\bar{\mathbf{H}}^{(t-1)}$. Then, we use sum aggregation to aggregate information from incoming and outgoing neighbors. This step directly accesses the adjacency matrices, $\mathbf{A}_u^{in}$ and $\mathbf{A}_u^{out}$, which contain sensitive structure information of interactions in user behavior sequences. Therefore, we need to add calibrated noise in this step to protect sensitive interactions:
\begin{equation}
    \begin{aligned}
    \widehat{\mathbf{H}}^{(t)}_{out} &= \mathbf{A}_u^{out} \cdot \bar{\mathbf{H}}^{(t-1)}+\mathcal{N}(\sigma^2 \mathbb{I}), \\
    \widehat{\mathbf{H}}^{(t)}_{in} &= \mathbf{A}_u^{in} \cdot \bar{\mathbf{H}}^{(t-1)}+\mathcal{N}(\sigma^2 \mathbb{I}),
    \end{aligned}
\end{equation}
where $\mathcal{N}(\sigma^2 \mathbb{I}) \in \mathbb{R}^{|V| \times (d+d')}$ denotes a noise matrix with each element drawn from Gaussian distribution $\mathcal{N}(0, \sigma^2)$ independently and $\widehat{\mathbf{H}}^{(t)}_{out}, \widehat{\mathbf{H}}^{(t)}_{in}$ are privately aggregated embedding matrices. Theorem \ref{agg_th} will prove that this step satisfies edge-level differential privacy. Besides, the post-processing property of differential privacy \citep{dwork2014algorithmic} guarantees that any operation afterwards will keep private with respect to adjacency matrices, thus we can protect sensitive interactions during recommendation. After neighborhood aggregation, we conduct a linear transformation on the aggregated embedding matrices and get intermediate representation of node $v_i$ as follows: 
\begin{equation}
\begin{aligned}
    \mathbf{a}_{out_{i}}^{(t)} & =(\widehat{\mathbf{H}}^{(t)}_{out} \cdot \mathbf{W}_{out})_i+\mathbf{b}_{out}, \\
    \mathbf{a}_{in_{i}}^{(t)} & =(\widehat{\mathbf{H}}^{(t)}_{in} \cdot \mathbf{W}_{in})_i +\mathbf{b}_{in}, \\
    \mathbf{a}_{i}^{(t)}&=\mathbf{a}_{out_{i}}^{(t)} \oplus \mathbf{a}_{in_{i}}^{(t)},
\end{aligned}
\end{equation}
where $i$ in $(\widehat{\mathbf{H}}^{(t)} \cdot \mathbf{W})_i$ denotes the $i$-th row, $\mathbf{b}_{out}, \mathbf{b}_{in} \in \mathbb{R}^{1\times d}$ are bias terms, and $\mathbf{W}_{out},\mathbf{W}_{in} \in \mathbb{R}^{(d+d')\times d}$ are learnable parameter matrices. $\mathbf{b}_{out}, \mathbf{b}_{in},\mathbf{W}_{out},\mathbf{W}_{in}$ are shared by all users. Then, we leverage gated recurrent unit (GRU) to combine intermediate representation of node $v_i$ with its hidden state of previous time, and then update the hidden state of current time:
\begin{equation}
    \mathbf{e}_{i}^{(t)} =\text{GRU}(\mathbf{a}_{i}^{(t)}, \mathbf{e}_{i}^{(t-1)}).
\end{equation}
It worths noting that $\mathbf{e}_{i}^{(t-1)}$ has been normalized in Equation (\ref{norm}) as a part of $\mathbf{h}_i^{(t-1)}$ and user embedding $\mathbf{e}_u$ will also be updated implicitly in this process. The whole aggregation step in DIPSGNN are shown in Algorithm \ref{agg}, and Theorem \ref{agg_th} proves its privacy guarantee.

\begin{algorithm}[tb]
\caption{Aggregation in DIPSGNN.}
\label{agg}
\begin{algorithmic}[1] 
 \State \textbf{Input}: adjacency matrices $\mathbf{A}_u^{\text{out}}, \mathbf{A}_u^{\text{in}}$; initialized user and item embeddings $\mathbf{e}_u^{(0)}$, $\mathbf{e}_i^{(0)}$; propagation steps $T$; embedding norm $C$; noise standard deviation $\sigma$ 
\State \textbf{Output}: user embedding $\mathbf{e}_u^{(T)}$ and item embedding $\mathbf{e}_{i}^{(T)}$ 

\For{$t=1, \cdots,T$}
        \For{$i=1, \cdots,|V|$} 
        \State $\mathbf{h}^{(t-1)}_i \leftarrow \mathbf{e}^{(t-1)}_i\oplus\mathbf{e}_u^{(t-1)}$
         \State      $\bar{\mathbf{H}}^{(t-1)}_i \leftarrow \mathbf{h}^{(t-1)}_i*\frac{C}{ ||\mathbf{h}^{(t-1)}_i||_2}$   
        \EndFor
        \item[]
        \State $\widehat{\mathbf{H}}^{(t)}_{out} \leftarrow \mathbf{A}_u^{out} \cdot \bar{\mathbf{H}}^{(t-1)}+\mathcal{N}(\sigma^2 \mathbb{I})$  
              
        \State  $\widehat{\mathbf{H}}^{(t)}_{in} \leftarrow \mathbf{A}_u^{in} \cdot \bar{\mathbf{H}}^{(t-1)}+\mathcal{N}(\sigma^2 \mathbb{I})$
        
        \item[]
        \State $\mathbf{a}_{out_{i}}^{(t)} \leftarrow (\widehat{\mathbf{H}}^{(t)}_{out} \cdot \mathbf{W}_{out})_i+\mathbf{b}_{out}$
        
        \State $\mathbf{a}_{in_{i}}^{(t)} \leftarrow (\widehat{\mathbf{H}}^{(t)}_{in} \cdot \mathbf{W}_{in})_i+\mathbf{b}_{in}$
              
        \State $\mathbf{a}_{i}^{(t)} \leftarrow \mathbf{a}_{out_{i}}^{(t)} \oplus \mathbf{a}_{in_{i}}^{(t)}$
        
        \item[]
              
        \State $\mathbf{e}_{i}^{(t)} \leftarrow \text{GRU}(\mathbf{a}_{i}^{(t)}, \mathbf{e}_{i}^{(t-1)})$
        
\EndFor
\State \textbf{return} $\mathbf{e}_{u}^{(T)}$, $\mathbf{e}_{i}^{(T)}$
\end{algorithmic}
\end{algorithm}

\begin{theorem}
\label{agg_th}
For any $\delta \in(0,1)$, propagation steps $T \geq 1$, and noise standard deviation $\sigma>0$, Algorithm \ref{agg} satisfies edge-level $(\epsilon_2, \delta)$-differential privacy with $\epsilon_2=\frac{TC^2}{2 \sigma^{2}}+\frac{C\sqrt{2 T \log (1 / \delta)}}{\sigma}$.
\end{theorem}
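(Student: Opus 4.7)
The plan is to prove the theorem through the zero-concentrated differential privacy (zCDP) framework, since the target expression $\epsilon_2=\frac{TC^2}{2\sigma^2}+\frac{C\sqrt{2T\log(1/\delta)}}{\sigma}$ is exactly what one recovers by (i) bounding a Gaussian mechanism with $L_2$-sensitivity $C$ by a per-step $\frac{C^2}{2\sigma^2}$-zCDP guarantee, (ii) composing it $T$ times, and (iii) converting the resulting $\rho$-zCDP bound to $(\epsilon,\delta)$-DP via the standard relation $\rho\text{-zCDP}\Rightarrow(\rho+2\sqrt{\rho\log(1/\delta)},\delta)\text{-DP}$. Rényi DP with the Mironov conversion would give an equivalent route, but zCDP makes the algebra line up transparently with the stated expression.

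First I would localize the dependence of Algorithm \ref{agg} on the sensitive adjacency matrices $\mathbf{A}_u^{out}$ and $\mathbf{A}_u^{in}$ to the single pair of lines that form $\widehat{\mathbf{H}}^{(t)}_{out}$ and $\widehat{\mathbf{H}}^{(t)}_{in}$. Every subsequent operation within a round, namely the linear maps through $\mathbf{W}_{out},\mathbf{W}_{in}$, the concatenation, and the GRU update, depends on the graph only through these two noisy aggregates together with previously released information, so by the post-processing property of differential privacy they incur no additional privacy cost. The proof therefore reduces to showing that the sequence of $T$ noisy aggregations is itself $(\epsilon_2,\delta)$-DP.

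Second, I would establish the per-step sensitivity. Because the normalization in Equation~(\ref{norm}) enforces $\lVert\bar{\mathbf{H}}_i^{(t-1)}\rVert_2=C$ for every row $i$, toggling a single edge in the user behavior graph changes the matrix product $\mathbf{A}_u\cdot\bar{\mathbf{H}}^{(t-1)}$ by shifting one output row by a single row of $\bar{\mathbf{H}}^{(t-1)}$, whose $L_2$-norm is $C$. Hence each aggregation step has $L_2$-sensitivity $C$, and adding Gaussian noise with scale $\sigma$ yields a $\frac{C^2}{2\sigma^2}$-zCDP mechanism (equivalently, $(\alpha,\frac{\alpha C^2}{2\sigma^2})$-Rényi DP for every $\alpha>1$). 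I expect this sensitivity bookkeeping to be the main obstacle: one must handle the \emph{coupled} change induced by an edge toggle on both $\mathbf{A}_u^{out}$ and $\mathbf{A}_u^{in}$, which are transposes of each other, and argue, following the directed aggregation-perturbation accounting of \citep{sajadmanesh2022gap}, that this coupling does not inflate the effective per-round sensitivity beyond $C$.

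Third, the $T$ rounds of noisy aggregation compose sequentially, and zCDP parameters add under sequential composition, so the whole loop is $\rho$-zCDP with $\rho=T\cdot\frac{C^2}{2\sigma^2}=\frac{TC^2}{2\sigma^2}$. Finally, applying the zCDP-to-DP conversion lemma with this $\rho$ and simplifying $2\sqrt{\rho\log(1/\delta)}=\frac{C\sqrt{2T\log(1/\delta)}}{\sigma}$ yields $\epsilon_2=\frac{TC^2}{2\sigma^2}+\frac{C\sqrt{2T\log(1/\delta)}}{\sigma}$, matching the claim. Combining this with the post-processing reduction from the first step establishes edge-level $(\epsilon_2,\delta)$-DP for Algorithm \ref{agg} in full.
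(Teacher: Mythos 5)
Your proposal takes essentially the same route as the paper: the paper works in R\'enyi DP, bounds the per-step $L_2$-sensitivity by $C$ via the row normalization exactly as you do, composes over $T$ steps to obtain $(\alpha, TC^2\alpha/2\sigma^2)$-RDP, and then minimizes the RDP-to-DP conversion over $\alpha$ --- which is precisely the optimization packaged inside the zCDP-to-DP lemma you invoke, so the two formalisms give identical algebra here. The post-processing reduction and the single-entry perturbation of the adjacency matrix are likewise the same.

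The one substantive point is the coupling you flag and defer: each round of Algorithm \ref{agg} releases \emph{two} noisy aggregates, $\mathbf{A}_u^{out}\bar{\mathbf{H}}^{(t-1)}$ and $\mathbf{A}_u^{in}\bar{\mathbf{H}}^{(t-1)}$, and a single edge toggle perturbs one entry of each. The paper's proof analyzes a generic $\mathbf{A}$ and composes only $T$ Gaussian mechanisms, silently treating the out- and in-aggregations as if only one of them touched the edge; accounting for both gives a joint per-round sensitivity of $C\sqrt{2}$ (equivalently, composing $2T$ rather than $T$ mechanisms), which doubles the per-round R\'enyi/zCDP cost and would yield $\epsilon_2=\frac{TC^2}{\sigma^2}+\frac{2C\sqrt{T\log(1/\delta)}}{\sigma}$ instead of the stated bound. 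So the coupling \emph{does} inflate the constant; your instinct that this is the main obstacle is right, and neither your sketch nor the paper's proof resolves it.
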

\begin{proof}
Please see appendix. 
\end{proof}

\subsection{Prediction and Training}
After finishing the update of all DIPSGNN layers, we get the final representation of all items. Then, we need to obtain a unified representation for each user to conduct next-item prediction. First, we apply a readout function to extract each user's local preference vector $\mathbf{z}_l$ and global preference vector $\mathbf{z}_g$ from item representations.  $\mathbf{z}_l$ is defined as $\mathbf{e}_{n_u}^{(T)}$, which is the final representation of the last item in user $u$'s behavior sequence $S_u$, and $\mathbf{z}_g$ is defined as:
\begin{equation}
    \begin{aligned}
    \alpha_{s} &=\mathbf{q}^{\top} \sigma (\mathbf{W}_{1} \mathbf{e}_{n_u}^{(T)}+\mathbf{W}_{2} \mathbf{e}_{s}^{(T)}+\mathbf{c}), \\
    \mathbf{z}_{\mathrm{g}} &=\sum_{s=1}^{|n_u|} \alpha_{s} \mathbf{e}_{s}^{(T)},
    \end{aligned}
\end{equation}
where $e_s^{(T)}$ refers to the final representation of the $s$-th item in $S_u$, $\mathbf{q} \in \mathbb{R}^d$ and $\mathbf{W}_{1},\mathbf{W}_{2} \in \mathbb{R}^{d \times d} $ are learnable weight matrices. Following \citep{zhang2020personalized}, we concatenate updated user embedding $\mathbf{e}_u^{(T)}$ with the local and global preference vectors, then user $u$'s unified representation $\mathbf{z}_u$ can be expressed as:
\begin{equation}
    \mathbf{z}_u=\mathbf{W}_3 (\mathbf{z}_g \oplus \mathbf{z}_l \oplus \mathbf{e}_u^{(T)}),
\end{equation}
where $\mathbf{W}_{3} \in \mathbb{R}^{d \times (2d+d')}$ is a learnable matrix and $d,d'$ are the dimension of item and user embedding respectively. With user $u$'s unified representation and final representation of all items, we can compute the predicted probability of the next item being $v_i$ by:
\begin{equation}
    \hat{\mathbf{y}}_i=\frac{\exp(\mathbf{z}_u^{\top} \cdot \mathbf{e}_{i}^{(T)})}{\sum_{j=1}^{|V|} \exp(\mathbf{z}_u^{\top} \cdot \mathbf{e}_{j}^{(T)})}.
\end{equation}
And, the loss function is cross-entropy of the prediction $\hat{\mathbf{y}}$ and the ground truth $\mathbf{y}$:
\begin{equation}
    \mathcal{L}(\hat{\mathbf{y}})=-\sum_{i=1}^{|V|} \mathbf{y}_{i} \log \left(\hat{\mathbf{y}}_{i}\right)+\left(1-\mathbf{y}_{i}\right) \log \left(1-\hat{\mathbf{y}}_{i}\right).
\end{equation}
Finally, we use the back-propagation through time (BPTT) algorithm to train the proposed DIPSGNN.

\section{Experiments}
\label{experiments}
\subsection{Experimental Settings}
We evaluate the performance of all methods on three real-world datasets: ML-1M\footnote{\url{grouplens.org/datasets/movielens/}.}, Yelp\footnote{\url{https://www.yelp.com/dataset}.} and Tmall\footnote{\url{tianchi.aliyun.com/dataset/dataDetail?dataId=42}.}. In ML-1M and Tmall, we have 3 categorical user features like age range, gender and occupation, while in Yelp, we have 6 numerical user features like number of ratings, average rating and etc. For Tmall dataset, we use the click data from November 1 to November 7. After obtaining the datasets, we adopt 10-core setting to filter out inactive users and items following \citep{chang2021sequential}. Table \ref{tb_stats} shows their statistics after preprocessing. For each user, we use the first $80\%$ of its behavior sequence as the training set and the remaining $20\%$ constitutes the test set. And hyperparameters are tuned on the validation set which is a random $10\%$ subset of the training set. In training set, we construct subsequences and labels by splitting the behavior sequence of each user. For example, user $u$ has a behavior sequence $S^u=\{v_1^u, v_2^u, \cdots, v_n^u\}$ in training set, we generate a series of subsequences and labels $(\{v_1^u\}, v_2^u), (\{v_1^u,v_2^u\}, v_3^u),\cdots), (\{v_1^u, v_2^u, \cdots, v_{n-1}^u\},v_{n}^u)$, where for example in $(\{v_1^u, v_2^u, \cdots, v_{n-1}^u\},v_{n}^u)$, $\{v_1^u, v_2^u, \cdots, v_{n-1}^u\}$ is the generated subsequence and $v_{n}^u$ is the label of the subsequence, i.e. the next-clicked item. Similarly, we can generate subsequences and labels for validation set and test set.

\begin{table}[htb!]
  \centering
  \caption{Statistics of datasets after preprocessing.}
    \begin{tabular}{cccc}
    \toprule
    Dataset & \multicolumn{1}{c}{ML-1M} & \multicolumn{1}{c}{Yelp} & \multicolumn{1}{c}{Tmall} \\
    \midrule
    Users  & 5,945 & 99,011 & 132,724 \\
    Items  & 2,810 & 56,428 & 98,226 \\
    Interactions  & 365,535 & 1,842,418 & 3,338,788 \\
    Avg. Length  & 96.07 & 27.90  & 36.18 \\
    User features     & 3     & 6     & 2 \\
    \bottomrule
    \end{tabular}%
  \label{tb_stats}%
\end{table}%

To evaluate the performance of DIPSGNN, we compare it with six non-private baselines (BPRMF, SASRec, HRNN, LightGCN, SRGNN, and APGNN) and three private ones (DPMF, DPSGD and EdgeRand):
\begin{itemize}
    \item 
    \textbf{BPRMF} \citep{DBLP:conf/uai/RendleFGS09} is a widely used learning to rank model with a pairwise ranking objective function;
    \item 
    \textbf{SASRec} \citep{kang2018self} is a sequential prediction model based on attention mechanism;
    \item 
    \textbf{HRNN} \citep{quadrana2017personalizing} uses a Hierarchical RNN model to provide personalized prediction in session-based recommendation;
    \item 
    \textbf{LightGCN} \citep{he2020lightgcn} is a simplified graph convolution network for recommendation;
    \item
    \textbf{SRGNN} \citep{wu2019session} utilizes the gated graph neural networks to capture item transitions in user behavior sequences;
    \item
    \textbf{APGNN} \citep{zhang2020personalized} considers user profiles based on SRGNN and captures item transitions in user-specific fashion;
    \item
    \textbf{DPMF}  \citep{DBLP:conf/ijcai/HuaXZ15} is a differentially private matrix factorization method. The original one are based on explicit feedback, we denote it as DPMF\_exp. There is only implicit feedback in Tmall, so we modify it with the same negative sampling strategy as BPRMF to fit in implicit feedback and denote it as DPMF\_imp. We also evaluate DPMF\_imp on other three datasets where ratings larger than $1$ are regarded as positive interactions;
    \item
\textbf{DPSGD} \citep{abadi2016deep} clips the gradient of all parameters in the model and adds noise to the clipped gradient to achieve differential privacy. We incorporate gaussian noise on the clipped gradient to ensure the same level of $(\epsilon,\delta)$-differential privacy as DIPSGNN for a fair comparison. 
    
    \item 
    \textbf{EdgeRand} is a graph structure perturbation method to protect user behaviors and the protection on user features is the same as DIPSGNN. Specifically, we add gaussian noise to the adjacency matrices of user behavior graphs to achieve the same level $(\epsilon,\delta)$-differential privacy as DIPSGNN for fair comparison. It can be seen as a data desensitization method for graphs with formal differential privacy guarantee. The only difference between EdgeRand and existing LapGraph \citep{wu2021linkteller} is that one uses gaussian mechanism and the other uses laplace mechanism. As DIPSGNN uses the notation of $(\epsilon,\delta)$-differential privacy, so we take EdgeRand as our baseline and regard it as the state-of-the-art method to protect edges in GNN following \citep{sajadmanesh2022gap}.
    
\end{itemize}

We aim to recommend top-K items that users are most likely to click on next. If the ground truth next item appears in the recommended top-K items, it means a correct recommendation. Following \citep{wu2019session,zhang2020personalized}, we adopt Recall@$K$ and MRR@$K$ with $K=5,10,20$ as our evaluation metrics. Recall@$K$ represents the proportion of correct recommendations in all samples. MRR@$K$ further considers the rank of the ground-truth next item in the top-K recommendation list, and a larger MRR value indicates that ground-truth next item is at the top of the top-K recommendation list. 

We run all the evaluations $5$ times with different random seeds and report mean value for each method.  The maximum length for user behavior sequences are $100,30,50$ for ML-1M, Yelp and Tmall respectively, which are slightly larger than the average length of user behavior sequences. We set the dimension of item embedding $d=100$ and user embedding dimension $d'=50$ following \citep{zhang2020personalized}, other hyperparameters are tuned to their optimal values on validation set. The model is trained with Adam optimizer and we train DIPSGNN 10 epochs as we observe that our model can reach convergence at that time. For all baseline methods, we use the optimal hyperparameters provided in the original papers.

As for the privacy specification setting, the privacy budget $\epsilon_1$ for protecting user features in EdgeRand and DIPSGNN  is set to $20$ by default following \citep{zhang2021graph}. And the privacy budget $\epsilon_2$ for protecting user behaviors is set to $5$ by default in all private methods, we numerically calibrate the noise standard deviation $\sigma$ according to this privacy budget following \citep{sajadmanesh2022gap}. $\delta$ is set to be smaller than the inverse number of edges. Meanwhile, the different privacy budgets for user features and user behaviors capture the variation of privacy expectations of heterogeneous data types, if all the data types are treated as equally sensitive, it will add too much unneeded noise and sacrifice utility \cite{ren2022cross}. The effect of different privacy budgets will be further discussed by empirical experiments.

\subsection{Performance Comparisons}

Table \ref{tab1} reports the performance comparisons between DIPSGNN and other baselines in terms of Recall@20, MRR@20, Recall@10 and MRR@10.\footnote{We can get similar results regarding $K=5$. Due to space limitation, we do not report.} We have the following observations: (1) the non-private graph based method SRGNN and APGNN achieve the best performance among non-private methods, demonstrating the strong power of adopting graph neural networks to capture complex item transitions and dynamic user preferences in sequential recommendation. (2) As for private baselines, DPMF with explicit feedback outperforms that with implicit feedback, because explicit feedback carries more accurate information about user preferences. But these two DPMF methods both perform much worse than EdgeRand and DIPSGNN. We attribute this phenomenon to that DPMF assumes interactions as independent while EdgeRand and DIPSGNN consider complex dependencies between items by building model on user behavior graphs. This highlights the importance to take into account the dependencies between interactions rather than treating them as independent when protecting user interactions. (3) DPSGD performs much worse than EdgeRand and DIPSGNN, as it adds noise to the gradients of all parameters, distorting the updating process. In contrast, EdgeRand only adds noise to the input, and DIPSGNN only adds noise to the aggregated embedding in GNN. Consequently, they both have a less negative impact on the final performance. (4) Our DIPSGNN consistently yields better performance than the state-of-the-art EdgeRand method for protecting interactions on all three datasets. Their relative gap in terms of Recall@20 is $19.99\%,6.16\%,2.95\%$ on ML-1M, Yelp and Tmall respectively. And their difference is also significant in terms of other metrics, which verifies the effectiveness of DIPSGNN. Moreover, DIPSGNN even outperforms the best non-private baseline APGNN in terms of MRR@20, Recall@10 and MRR@10 on ML-1M. A possible explanation is that controlled amount of noise during training may improve the generalization performance on test set \citep{jim1996analysis}. (5) The performance of DIPSGNN is more competitive than commonly used deep learning recommendation methods such as LightGCN, HRNN and SASRec. And it also beats SRGNN on ML-1M and Yelp. This demonstrates the feasibility of applying DIPSGNN in real-world applications to provide accurate recommendation and protect sensitive user information simultaneously.  

\begin{table*}[htb!]
  \centering
  \caption{Comparative results of different approaches on all datasets. Bold number means DIPSGNN outperforms EdgeRand. Statistical significance of pairwise differences between DIPSGNN vs. EdgeRand is determined by a paired t-test ($^{*}$ for $p<0.01$). OOM denotes out-of-memory and R denotes Recall, M denotes MRR for succinctness. Nonp means non-private baselines and Priv means private baselines.}
  \resizebox{\textwidth}{27mm}{
    \begin{tabular}{cccccccccccccc}
    \toprule
    \multirow{2}[4]{*}{Type} & \multirow{2}[4]{*}{Method} & \multicolumn{4}{c}{ML-1M}     & \multicolumn{4}{c}{Yelp}      & \multicolumn{4}{c}{Tmall} \\
\cmidrule(lr){3-6} \cmidrule(lr){7-10}  \cmidrule(lr){11-14}        &     & R@20  & M@20  & R@10  & M@10 & R@20  & M@20  & R@10  & M@10 & R@20  & M@20  & R@10  & M@10 \\
    \midrule
    \multirow{6}[2]{*}{Nonp} & BPRMF & 4.56 & 0.81  & 2.20 & 0.65    & 4.38 & 1.01 & 2.58 & 0.89   & 15.19 & 6.77 & 12.28 & 6.57 \\
          & SASRec & 6.69  & 1.20 & 3.44 & 0.98    & 4.70 & 1.03 & 2.69  & 0.89   & 14.46 & 5.08 & 10.46 & 4.81 \\
          & HRNN  & 18.43 & 4.70 & 11.70 & 4.24   & 1.49 & 0.33 & 0.83 & 0.29   & 13.34 & 5.82 & 10.45 & 5.62 \\
          & LightGCN & 8.53 & 1.67 & 4.59 & 1.40   & 6.18 & 1.41 & 3.65 & 1.24 & OOM  & OOM & OOM  & OOM \\
    & SRGNN & 21.01 & 5.15 & 13.09 & 4.61 & 7.24 & 1.77 & 4.41 & 1.58 & 24.85 & 11.55  & 20.42 & 11.24 \\
    & APGNN  & 21.26 & 5.29 & 13.23  & 4.75 & 7.26 & 1.81 & 4.46 & 1.62  & 24.86 & 11.57 & 20.38 & 11.26 \\
    \midrule
    \multirow{3}[2]{*}{Priv} & DPMF\_imp  & 3.15 & 0.66 & 1.77 & 0.57   & 0.41& 0.10 & 0.25 & 0.09   & 1.95 & 0.36 & 1.18  & 0.30 \\
    & DPMF\_exp & 3.40  & 0.71 & 1.89  & 0.61   & 1.30  & 0.32 & 0.83 & 0.28   & -  & - & -  & - \\
    & DPSGD & 4.60  & 0.82 & 2.26  & 0.68   & 0.95  & 0.20 & 0.56 & 0.17   & 1.53  & 0.45 & 0.94  & 0.41 \\
    & EdgeRand & 17.61 & 4.28 &  10.88 & 3.82 & 6.82 & 1.66 & 4.17 & 1.49 & 20.35 & 9.80 & 16.61 & 9.54 \\
    \midrule
    \multirow{2}[2]{*}{Ours} & DIPSGNN & \textbf{21.13} & \textbf{6.11} & \textbf{14.04} & \textbf{5.63} & \textbf{7.24}& \textbf{1.78}  & \textbf{4.45}& \textbf{1.59}  & \textbf{20.95} & \textbf{9.98}  & \textbf{17.15} & \textbf{9.71}  \\
    & Improve & 19.99\%$^{*}$ & 42.76\%$^{*}$ & 29.04\%$^{*}$ & 47.38\%$^{*}$& 6.16\%$^{*}$ & 7.23\%$^{*}$ & 6.71\%$^{*}$& 6.71\%$^{*}$ & 2.95\%$^{*}$ & 1.84\%$^{*}$& 3.25\%$^{*}$ & 1.78\%$^{*}$\\
    \bottomrule
    \end{tabular}%
    }
  \label{tab1}%
\end{table*}%

\subsection{Effect of Privacy Budget}
\label{privbudget}
To analyze the trade-off between privacy and accuracy with different privacy budgets, we first fix the budget for protecting user features $\epsilon_1$ at its default value $20$ and change the budget for protecting user interactions $\epsilon_2$ in $\{3,4,5\}$. The experimental results in terms of Recall@20 are presented in Figure \ref{ep2_recall20}. As Table \ref{tab1} shows, DPMF and DPSGD performs much worse than EdgeRand and DIPSGNN, so we hide them. We can observe that both EdgeRand and DIPSGNN generally perform better with a larger $\epsilon_2$, as the noise added on the adjacency matrices or item embedding matrix in GNN will both decrease when $\epsilon_2$ rises thus brings more accurate recommendation. Meanwhile, DIPSGNN always outperforms EdgeRand except when $\epsilon_2=3$ on Tmall. And the performance gap between them tends to enlarge with a larger $\epsilon_2$,  this confirms again the effectiveness of our proposed DIPSGNN for protecting user interactions. Similarly, we also conduct experiments by changing the budget for user feature protection $\epsilon_1$ in $\{10,20,30\}$ with the fixed privacy budget $\epsilon_2=5$ for user interaction protection. Figure \ref{ep1_recall20} shows experimental results in terms of Recall@20. The performance of EdgeRand and DIPSGNN consistently rises with a larger $\epsilon_1$. As we add less noise to user features when $\epsilon_1$ increases, it will facilitate more accurate modeling of user interests, leading to more satisfying recommendation results. And DIPSGNN outperforms EdgeRand all the time which again shows the superiority.

\begin{figure*}[htb!]
    \centering
    \includegraphics[scale=0.24]{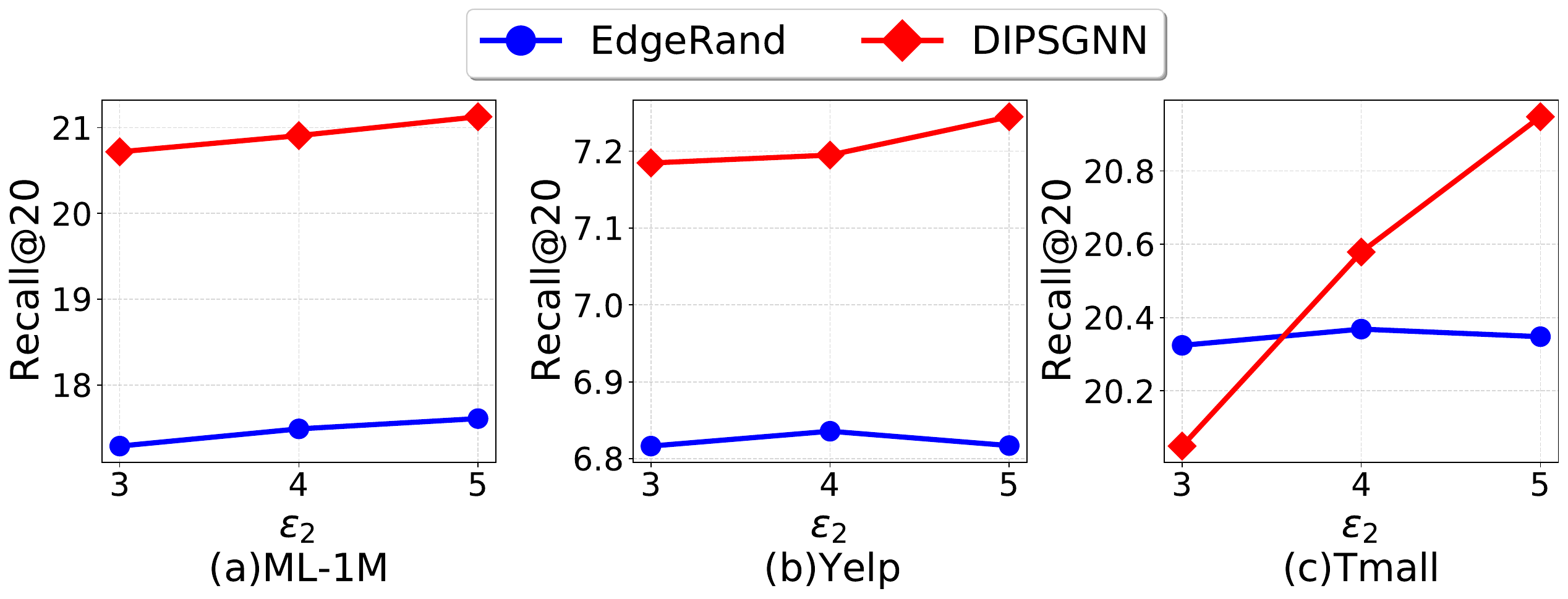}
    \caption{Recall@20 of DIPSGNN and EdgeRand with different $\epsilon_2$ (privacy budget for protect user interactions).}
    \label{ep2_recall20}
\end{figure*}

\begin{figure*}[htbp!]
    \centering
    \includegraphics[scale=0.24]{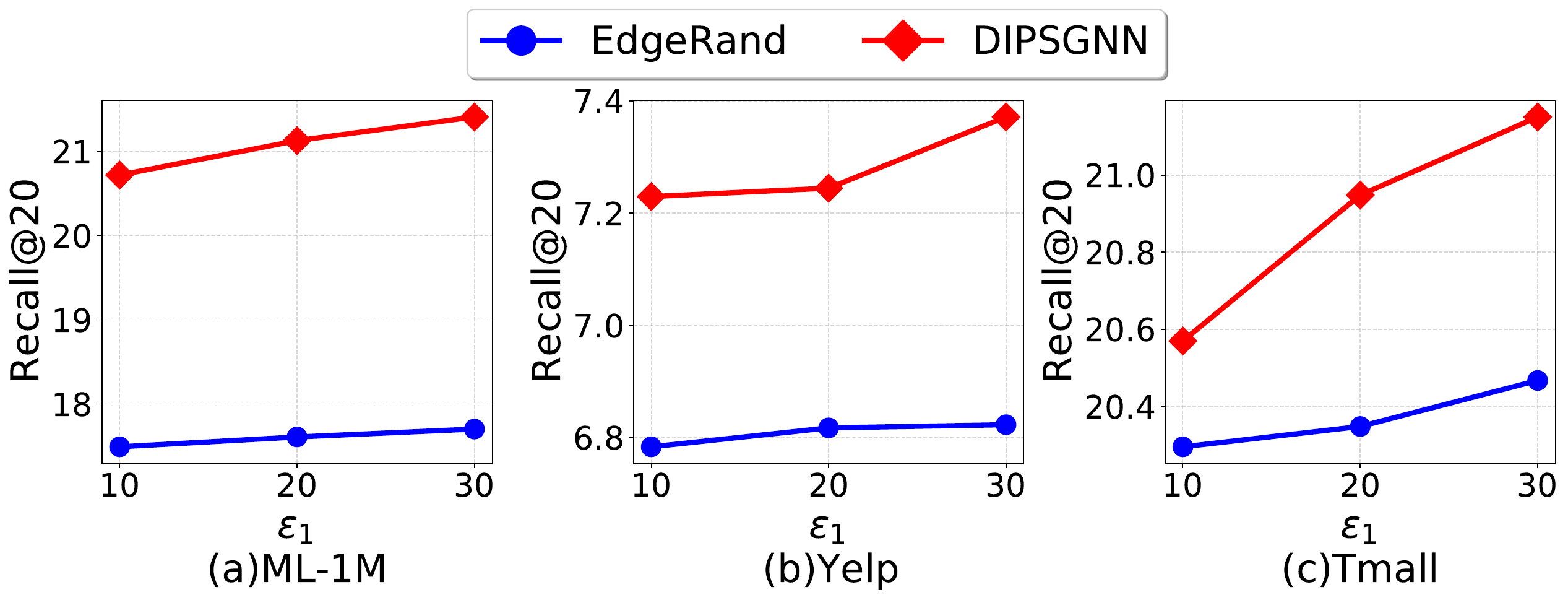}
    \caption{Recall@20 of DIPSGNN and EdgeRand with different $\epsilon_1$ (privacy budget for protect user features).}
    \label{ep1_recall20}
\end{figure*}

\subsection{Hyperparameter Study}

\subsubsection{Effect of GNN Propagation Steps}
In this section, we first study the influence of GNN propagation steps $T$ on the performance of EdgeRand and DIPSGNN. We fix privacy guarantees at their default values and change GNN propagation steps $T$ in $\{1,2,3,4,5\}$. Figure \ref{steps_recall20} shows the experimental results obtained on three datasets. It is evident that the performance of DIPSGNN and EdgeRand both decreases noticeably with larger aggregation steps $T$. Additionally, the performance decline of EdgeRand is apparently more pronounced than that of DIPSGNN, with a dramatic drop observed at steps $T=4$ and $T=5$. This is attributed to EdgeRand adding noise to the original adjacency matrix, distorting neighborhood aggregation inside GNNs \citep{kolluri2022lpgnet,hidano2022degree} and the distortion effect becomes more prominent with more aggregation steps.  On the contrary, DIPSGNN aggregates information from neighbors based on an unperturbed adjacency matrix. The slight performance decrease of DIPSGNN comes from the fact that we need to add more noise to node embedding matrix to maintain the same privacy guarantee, which can be derived from Theorem \ref{agg_th}. 

\begin{figure*}[htb!]
    \centering
    \includegraphics[scale=0.24]{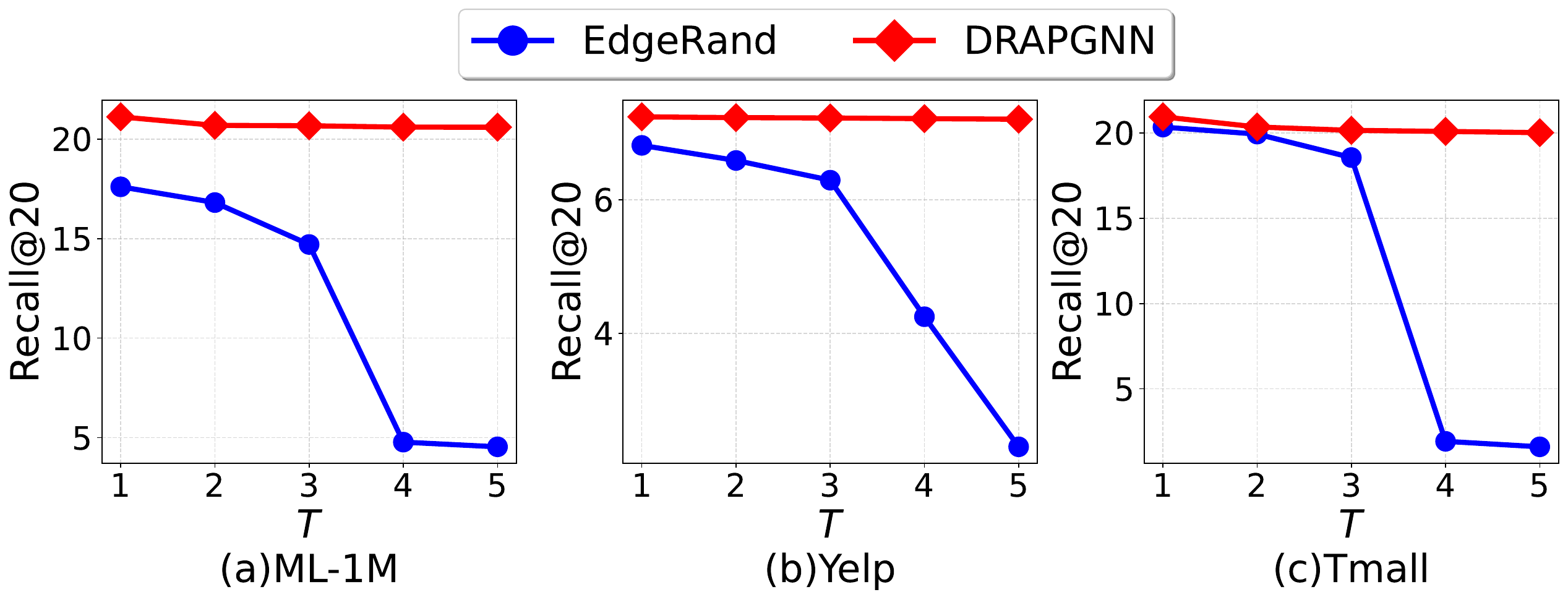}
    \caption{Recall@20 of DIPSGNN and EdgeRand with different GNN aggregation steps $T$.}
    \label{steps_recall20}
\end{figure*}

\subsubsection{Effect of Embedding Norm}
We then investigate how the norm of each row in node embedding matrix affects the performance of DIPSGNN. As we can see from Theorem \ref{agg_th} that a smaller $C$ forces us to add less noise to the embedding matrix at the same privacy guarantee, but if $C$ is too small, the elements of embedding matrix may diverge too much from their true value. To find the proper $C$ for each dataset, we fix privacy guarantees at their default values, GNN propagation steps $T=1$, then select $C$ from $\{0.2,0.4,0.6,0.8,1\}$ for ML-1M,  $\{0.1,0.3,0.5,0.7,0.9\}$ for Yelp and Tmall. The experimental results in terms of Recall@20 are shown in Figure \ref{clip_norm_recall20}. In general, Recall@20 continues to fall with a larger $C$ on ML-1M and Yelp, it reaches the highest value at $C=0.2$ and $C=0.1$ on ML-1M and Yelp respectively. On Tmall, Recall@20 first increases and reaches the highest value at $C=0.3$, then decreases when $C$ becomes larger. We can make a general conclusion that a large $C$ may bring excessive noise and low utility on these three datasets.

\begin{figure*}[htbp!]
    \centering
    \includegraphics[scale=0.24]{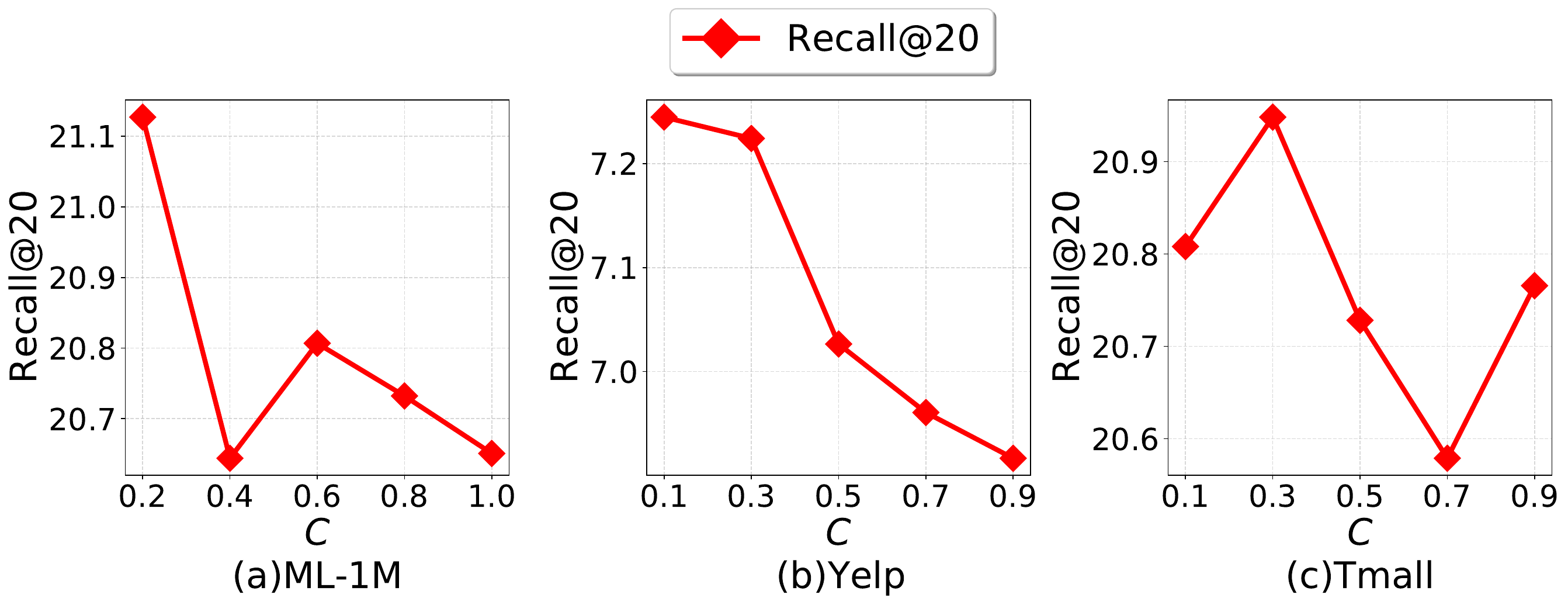}
    \caption{Recall@20 of DIPSGNN with different embedding norm $C$.}
    \label{clip_norm_recall20}
\end{figure*}

\subsection{Importance of User Features}
To verify the necessity of protecting user features to get a better balance between privacy and accuracy, we compare the performance of DIPSGNN with other three variants of DIPSGNN.  \textbf{Nonp} means no noise is added on user features or the training of GNN by setting $\epsilon_1=\infty$ and $\epsilon_2=\infty$. \textbf{Nonp-U} means no user features are exploited and user embedding is initialized by user id. Similarly, \textbf{Priv} means the normal DIPSGNN with perturbed user features and noise added on node embedding matrix during the training of GNN. And in \textbf{Priv-U}, perturbed user features are replaced by user id for initializing user embedding, but the same amount of noise is added on the node embedding matrix during the training of GNN. We show the experimental results in Figure \ref{ablation} and have the following findings. In both non-private and private settings, adding user features can help to capture more accurate user preference and improve recommendation performance. It illustrates the importance of taking user side information, besides user-item interactions, into consideration for a better balance between privacy and accuracy.

\begin{figure*}[htbp!]
    \centering
    \includegraphics[scale=0.28]{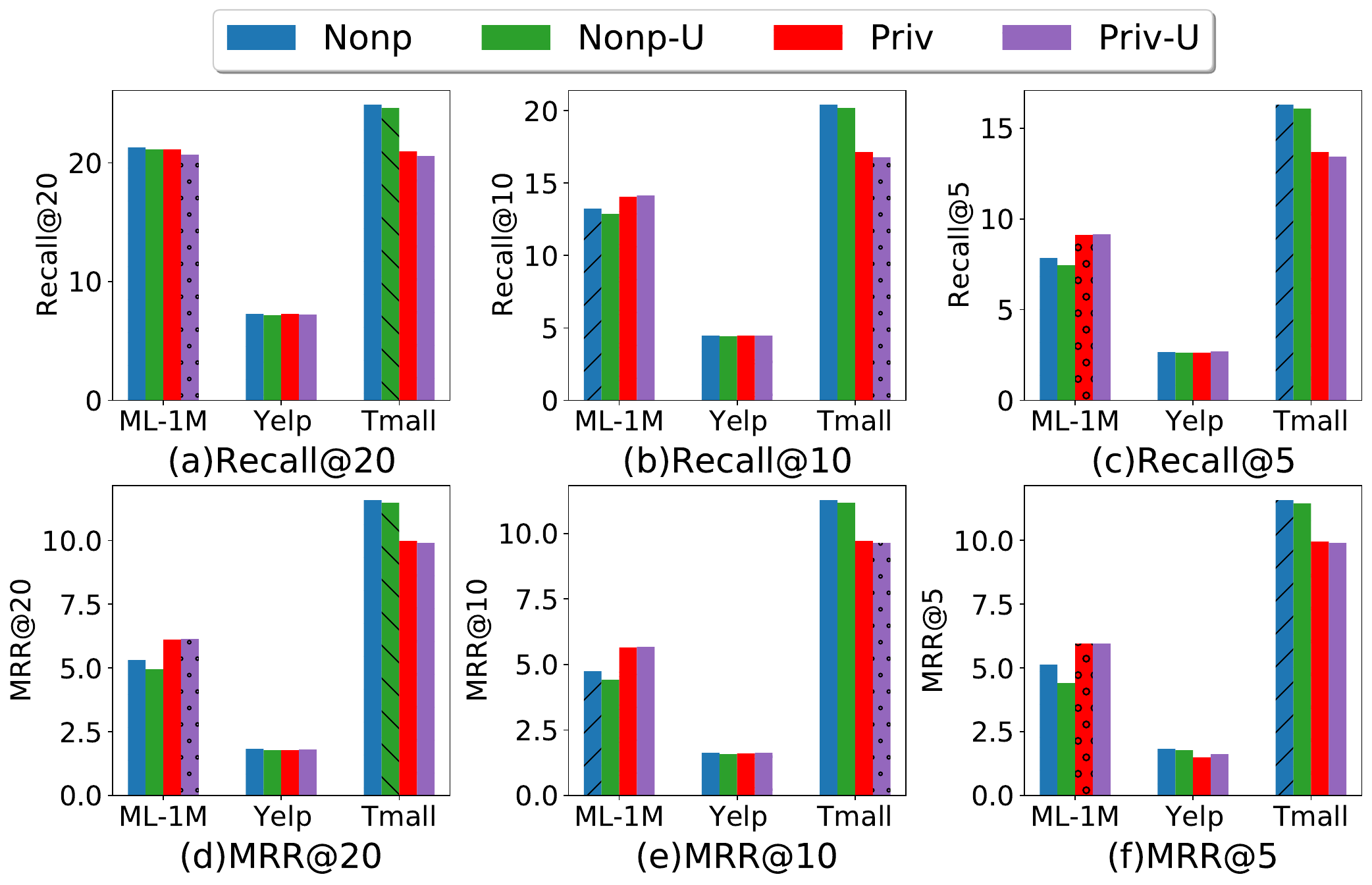}
    \caption{Performance of DIPSGNN compared with three ablation models.}
    \label{ablation}
\end{figure*}

\section{Conclusions and future work}
\label{conclusion}

In this paper, we address how to protect sensitive user features and interactions concurrently without great sacrifice of accuracy in sequential recommender system. We propose a differentially private sequential recommendation framework named DIPSGNN. DIPSGNN protects sensitive user features by adding noise to raw features at input stage. The noise scale is determined by piecewise mechanism which can process numerical and categorical features to make them satisfy local differential privacy. And the post-processing property of differential privacy will guarantee that user features are always well protected in the recommendation algorithm. As for the protection on user interactions, we first transform interaction sequences into directed and weighted user behavior graphs. Then, user behavior graphs are fed into gated graph neural network to model sequential dependencies and user preference. In this graph neural network, we design a novel aggregation step to protect adjacency matrices of user behavior graphs, thus to protect user behaviors. Concretely, calibrated noise is added into the aggregation step to make it satisfy differential privacy with respect to adjacency matrices of user behavior graphs. And we empirically demonstrate the superiority of this aggregation perturbation method than conventional graph structure perturbation method for protecting user interactions. Besides, extensive experimental results on three datasets (ML-1M, Yelp and Tmall) evidence that our proposed DIPSGNN can achieve significant gains over state-of-the-art differentially private recommender systems.

For future work, we will extend our framework to other popular graph neural networks such as graph attention networks (GATs) \citep{velivckovic2018graph} and GraphSAGE \citep{hamilton2017inductive}. Besides, we are also interested in incorporating personalized privacy preferences into our framework, as users vary substantially in privacy attitudes in real life \citep{li2020context}.

\section{Acknowledgement}
We greatly acknowledge the support of National Natural Science Foundation of China (Grant No. 72371148 and 72192832), the Shanghai Rising-Star Program (Grant No. 23QA1403100), the Natural Science Foundation of Shanghai (Grant No. 21ZR1421900) and Graduate Innovation Fund of Shanghai University of Finance and Economics (Grant No.CXJJ-2022-366).

\appendix

\section{Proof of Theorem 1}
\begin{proof}
First, we prove Algorithm \ref{pm1} satisfies $\epsilon$-local differential privacy.

In Algorithm \ref{pm1}, $C=\frac{\exp (\epsilon / 2)+1}{\exp (\epsilon / 2)-1}$, $l(x)=\frac{C+1}{2} \cdot x-\frac{C-1}{2}, r(x)=l(x)+C-1$. If $c \in [l(x), r(x)]$, then
\begin{equation*}
\begin{aligned}
    \operatorname{Pr}(x'=c|x)&=\frac{\exp(\epsilon/2)}{\exp(\epsilon/2)+1} \cdot \frac{1}{r(x)-l(x)} \\
    &=\frac{\exp (\epsilon)-\exp (\epsilon / 2)}{2 \exp (\epsilon / 2)+2}=p.
\end{aligned}
\end{equation*}
Similarly, if $c \in [-C,l(x))\cup (r(x),C]$, then
\begin{equation*}
\begin{aligned}
    \operatorname{Pr}(x'=c|x)&=(1-\frac{\exp(\epsilon/2)}{\exp(\epsilon/2)+1}) \cdot \frac{1}{2C-r(x)+l(x)} \\
    &=\frac{\exp (\epsilon/2)-1}{2 \exp (\epsilon)+2 \exp (\epsilon / 2)}=\frac{p}{\exp{(\epsilon)}}.
\end{aligned}
\end{equation*}
Then if $x_1, x_2 \in[-1,1]$ are any two input values and $x' \in[-C, C]$ is the output of Algorithm \ref{pm1}, we have:
\begin{equation*}
    \frac{\operatorname{Pr}(x' \mid x_1)}{\operatorname{Pr}(x' \mid x_2)} \leq \frac{p}{p / \exp (\epsilon)}=\exp(\epsilon).
\end{equation*}
Thus, Algorithm \ref{pm1} satisfies $\epsilon$-LDP. In Algorithm \ref{pm2}, we let $\epsilon=\frac{\epsilon_1}{k}$, so perturbation of numerical feature satisfies $\frac{\epsilon_1}{k}$-LDP.

Analogously, we prove the perturbation of categorical features in Algorithm \ref{pm2} satisfies $\frac{\epsilon_1}{k}$-LDP. Suppose $\mathbf{x}_1$ and $\mathbf{x}_2$ are any two $m$-dimensional one-hot vector for perturbation and the output is $\mathbf{x}'$. In $\mathbf{x}_1$ and  $\mathbf{x}_2$, $\mathbf{x}_1[v_1]=1, \mathbf{x}_2[v_2]=1$  $(v_1 \neq v_2)$ and all other elements are $0$. Let $p=0.5$ and $q=\frac{1}{\exp(\epsilon_1/k)+1}$, $p>q$, we have:
\begin{equation*}
\begin{aligned}
&\frac{\operatorname{Pr}(\mathbf{x}' \mid \mathbf{x}_1)}{\operatorname{Pr}(\mathbf{x}' \mid \mathbf{x}_2)} = \frac{\prod_{i \in[m]} \operatorname{Pr}(\mathbf{x}'[i] \mid \mathbf{x}_1[i])}{\prod_{i \in[m]} \operatorname{Pr}(\mathbf{x}'[i] \mid \mathbf{x}_2[i])}  \\
&=\frac{\operatorname{Pr}(\mathbf{x}'[v_1] \mid \mathbf{x}_1[v_1]) \operatorname{Pr}(\mathbf{x}'[v_2] \mid \mathbf{x}_1[v_2])} {\operatorname{Pr}(\mathbf{x}'[v_1] \mid \mathbf{x}_2[v_1]) \operatorname{Pr}(\mathbf{x}'[v_2] \mid \mathbf{x}_2[v_2])} \\
&\leq \frac{\operatorname{Pr}(\mathbf{x}'[v_1]=1 \mid \mathbf{x}_1[v_1]=1) \operatorname{Pr}(\mathbf{x}'[v_2]=0 \mid \mathbf{x}_1[v_2]=0)}{\operatorname{Pr}(\mathbf{x}'[v_1]=1 \mid \mathbf{x}_2[v_1]=0) \operatorname{Pr}(\mathbf{x}'[v_2]=0 \mid \mathbf{x}_2[v_2]=1)} \\
&=\frac{p}{q} \cdot \frac{1-q}{1-p}=\exp(\frac{\epsilon_1}{k}).
\end{aligned}
\end{equation*}
Thus, the perturbation of categorical feature also satisfies $\frac{\epsilon_1}{k}$-LDP. As Algorithm \ref{pm3} is composed of $k$ times perturbation of numerical or categorical feature, and all of them satisfy $\frac{\epsilon_1}{k}$-LDP, so Algorithm \ref{pm3} satisfies $\epsilon_1$-LDP based on the composition theorem of differential privacy. 
\end{proof}

\section{Proof of Theorem 2}

\begin{proof}

The proof of Theorem \ref{agg_th} uses an alternative definition of differential privacy (DP), called Rényi Differential Privacy (RDP) \citep{mironov2017renyi} which is defined as follows,
\begin{definition}[Rényi Differential Privacy]
A randomized algorithm $\mathcal{A}$ is $(\alpha, \epsilon)$-RDP for $\alpha>1, \epsilon>0$ if for every adjacent datasets $X \sim X^{\prime}$, we have:
$$
D_{\alpha}\left(\mathcal{A}(X) \| \mathcal{A}\left(X^{\prime}\right)\right) \leq \epsilon
$$,
where $D_{\alpha}(P \| Q)$ is the Rényi divergence of order $\alpha$ between probability distributions $P$ and $Q$, defined as:
$$
D_{\alpha}(P \| Q)=\frac{1}{\alpha-1} \log \mathbb{E}_{x \sim Q}\left[\frac{P(x)}{Q(x)}\right]^{\alpha} 
$$
\end{definition}

A basic mechanism to achieve RDP is the Gaussian mechanism. Let $f: \mathcal{X} \rightarrow \mathbb{R}^{d}$ and we first define the sensitivity of $f$ as,
$$
\Delta_{f}=\max _{X \sim X^{\prime}}\left\|f(X)-f\left(X^{\prime}\right)\right\|_{2} 
$$
Then, adding Gaussian noise with variance $\sigma^{2}$ to $f$ as:
$$
\mathcal{A}(X)=f(X)+\mathcal{N}\left(\sigma^{2} \mathbf{e}_{d}\right),
$$
where $\mathcal{N}\left(\sigma^{2} \mathbf{e}_{d}\right) \in  \mathbb{R}^d$ is a vector with each element drawn from $\mathcal{N}(0,\sigma^2)$ independently, yields an $(\alpha, \frac{\Delta_{f}^{2} \alpha}{2 \sigma^{2}})$-RDP algorithm for all $\alpha>1$  \citep{mironov2017renyi}.

As the analysis on $\mathbf{A}_u^{out}$ and $\mathbf{A}_u^{in}$ is the same, so we use $\mathbf{A}$ to denote $\mathbf{A}_u^{out}$ or $\mathbf{A}_u^{in}$. If we delete an interaction from user $u$'s behavior sequence, $\mathbf{A}_u^{out}$ and $\mathbf{A}_u^{in}$ will both only change by 1 at a single position. Let $\mathbf{A}$ and $\mathbf{A}^{'}$ be two neighboring adjacency matrix that only differ by 1 at a single position. Specifically, there exist two nodes $p$ and $q$ such that:
\begin{equation*}
    \begin{cases}
   |\mathbf{A}_{i,j}-\mathbf{A}_{i,j}^{'} |=1 & \text{if}\ i=p\ \text{and}\ j=q, \\
   \mathbf{A}_{i,j}=\mathbf{A}_{i,j}^{'} & \text{otherwise}.
    \end{cases}
\end{equation*}
The sensitivity of sum aggregation step is:
\begin{equation*}
    \begin{aligned}
    &||\mathbf{A} \cdot \bar{\mathbf{H}}^{(t-1)}-\mathbf{A}^{'} \cdot \bar{\mathbf{H}}^{(t-1)}||_F \\
    &=(\sum_{i=1}^{|V|}\|\sum_{j=1}^{|V|}(\mathbf{A}_{i,j} \bar{\mathbf{H}}^{(t-1)}_j-\mathbf{A}_{i, j}^{'}\bar{\mathbf{H}}^{(t-1)}_j)\|_{2}^{2})^{1 / 2} \\
    &=(\|\mathbf{A}_{p, q} \bar{\mathbf{H}}^{(t-1)}_p-\mathbf{A}_{p, q}^{'} \bar{\mathbf{H}}^{(t-1)}_p\|_{2}^{2})^{1 / 2} \\
    &=\|(\mathbf{A}_{p, q}-\mathbf{A}_{p, q}^{'}) \bar{\mathbf{H}}^{(t-1)}_p\|_{2} \\
    &=\|\bar{\mathbf{H}}^{(t-1)}_p\|_{2} \\
    & = C
    \end{aligned}
\end{equation*}
where $\bar{\mathbf{H}}^{(t-1)}_j$ is the $j$-th row of row-normalized feature matrix $\bar{\mathbf{H}}^{(t-1)}$. The sensitivity of each aggregation step is $C$ so it satisfies $(\alpha,C^2\alpha/2\sigma^2)$-RDP based on Gaussian mechanism. And, aggregation in DIPSGNN can be seen as an adaptive composition of $T$ such mechanisms, based on composition property of RDP \citep{mironov2017renyi}, the total privacy cost is $(\alpha,TC^2\alpha/2\sigma^2)$-RDP.

As RDP is a generalization of DP, it can be converted back to standard $(\epsilon,\delta)$-DP using the following lemma.

\begin{lemma}
If $\mathcal{A}$ is an $(\alpha, \epsilon)$-\text{RDP} algorithm, then it also satisfies $(\epsilon+\log (1 / \delta) / \alpha-1, \delta)-D P$ for any $\delta \in(0,1)$.
\end{lemma}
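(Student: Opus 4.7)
The plan is to convert the RDP bound on Rényi divergence into a standard $(\epsilon',\delta)$-DP statement by isolating a small ``bad'' event on which the likelihood ratio is large, and bounding its probability via a Markov-style argument applied to the Rényi moment. Let $P=\mathcal{A}(X)$ and $Q=\mathcal{A}(X')$ for a pair of neighboring inputs. The hypothesis $D_\alpha(P\|Q)\le \epsilon$ unpacks, after multiplying through by $\alpha-1$ and exponentiating, into the moment inequality $\mathbb{E}_{x\sim Q}\!\left[(P(x)/Q(x))^\alpha\right]\le e^{(\alpha-1)\epsilon}$, which is equivalent to $\mathbb{E}_{x\sim P}\!\left[(P(x)/Q(x))^{\alpha-1}\right]\le e^{(\alpha-1)\epsilon}$ by a change of measure. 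This reformulation under $P$ is the form I want to feed into Markov.

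First I would set $\epsilon' = \epsilon + \log(1/\delta)/(\alpha-1)$ and define the bad set $B = \{x : P(x)/Q(x) > e^{\epsilon'}\}$. Applying Markov's inequality to the nonnegative random variable $(P/Q)^{\alpha-1}$ under $P$ gives
\begin{equation*}
P(B) \;\le\; e^{-(\alpha-1)\epsilon'}\cdot \mathbb{E}_{x\sim P}\!\left[(P(x)/Q(x))^{\alpha-1}\right] \;\le\; e^{(\alpha-1)(\epsilon-\epsilon')} \;=\; \delta,
\end{equation*}
where the last equality is exactly the defining property of $\epsilon'$. On the complement $B^c$ the likelihood ratio is bounded pointwise by $e^{\epsilon'}$, so for any measurable event $S$ I can split
\begin{equation*}
P(S) \;=\; P(S\cap B) + P(S\cap B^c) \;\le\; P(B) + e^{\epsilon'}\, Q(S\cap B^c) \;\le\; \delta + e^{\epsilon'}\, Q(S),
\end{equation*}
which is precisely the $(\epsilon',\delta)$-DP inequality demanded by Definition \ref{def:dp}. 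Since the neighboring pair $(X,X')$ was arbitrary, the bound holds uniformly.

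The only subtlety, which doubles as the main design choice rather than an obstacle, is that one must invoke Markov on the $(\alpha-1)$-th moment taken under $P$, not the $\alpha$-th moment taken under $Q$; otherwise the tail one needs to control (namely $P(B)$, the probability of the bad event under the target measure) is not directly accessible. The change of measure $\mathbb{E}_Q[(P/Q)^\alpha]=\mathbb{E}_P[(P/Q)^{\alpha-1}]$ routes around this. After that, the chosen shift $\epsilon'-\epsilon = \log(1/\delta)/(\alpha-1)$ is calibrated precisely so that the Markov tail equals $\delta$, and the argument closes without further calculation.
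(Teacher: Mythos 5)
Your proof is correct: the change of measure $\mathbb{E}_Q[(P/Q)^\alpha]=\mathbb{E}_P[(P/Q)^{\alpha-1}]$, the Markov bound giving $P(B)\le\delta$, and the split of $P(S)$ over $B$ and $B^c$ are all sound, and you rightly read the lemma's $\log(1/\delta)/\alpha-1$ as $\log(1/\delta)/(\alpha-1)$, which is a typo in the statement. The paper itself offers no proof of this lemma --- it is imported verbatim from the cited R\'enyi-DP reference \citep{mironov2017renyi} --- and your argument is essentially the standard one given there, so there is nothing to reconcile.
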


Therefore, $(\alpha,TC^2\alpha/2\sigma^2)$-RDP in DIPSGNN is equivalent to edge-level $(\epsilon_2, \delta)$-DP with $\epsilon_2=\frac{T C^2 \alpha}{2 \sigma^{2}}+\frac{\log (1 / \delta)}{\alpha-1}$. Minimizing this expression over $\alpha>1$ gives $\epsilon_2=\frac{TC^2}{2 \sigma^{2}}+C\sqrt{2 T \log (1 / \delta)} / \sigma$. So we conclude that the aggregation in DIPSGNN satisfies edge-level $(\epsilon_2, \delta)$-DP with $\epsilon_2=\frac{TC^2}{2 \sigma^{2}}+C\sqrt{2 T \log (1 / \delta)} / \sigma$.
\end{proof}



\bibliographystyle{ACM-Reference-Format}
\bibliography{sample-base}


\end{document}